\colorlet{MyBlue}{DodgerBlue!75!Black}
\colorlet{MyGreen}{DarkGreen!85!Black}
\newcommand{\afterhead}{.}								
\newcommand{\citepos}[1]{\citeauthor{#1}'s \textpar{\citeyear{#1}}}
\newcommand{\EMAIL}[1]{\email{\href{mailto:#1}{#1}}}
\numberwithin{equation}{section}							
\crefname{example}{Example}{Examples}
\DeclarePairedDelimiter{\bracks}{[}{]}						
\DeclarePairedDelimiter{\parens}{(}{)}						
\DeclarePairedDelimiter{\abs}{\lvert}{\rvert}					
\DeclarePairedDelimiter{\norm}{\lVert}{\rVert}					
\DeclarePairedDelimiterXPP{\dnorm}[1]{}{\lVert}{\rVert}{_{\ast}}{#1}		
\DeclarePairedDelimiterX{\braket}[2]{\langle}{\rangle}{#1,#2}			
\DeclarePairedDelimiterX{\product}[2]{\langle}{\rangle}{#1,#2}			
\DeclarePairedDelimiterX{\setdef}[2]{\{}{\}}{#1:#2}				
\DeclarePairedDelimiterXPP{\exclude}[1]{\mathopen{}\setminus}{\{}{\}}{}{#1}
\newcommand{\eg}{e.g.,\xspace}							
\newcommand{\ie}{i.e.,\xspace}							
\newcommand{\textpar}[1]{\textup(#1\textup)}					
\newcommand{\txs}{\textstyle}							
\newcommand{\alt}[1]{#1'}								
\newcommand{\dual}[1]{#1^{\ast}}							
\newcommand{\est}[1]{\hat #1}							
\newcommand{\sol}[1][\debug x]{#1^{\ast}}						
\newcommand{\var}[1]{\tilde#1}							
\DeclareMathOperator{\bigoh}{\mathcal O}						
\newcommand{\R}{\mathbb{R}}							
\newcommand{\vdim}{\debug d}								
\newcommand{\vecspace}{\mathcal{\debug V}}						
\newcommand{\dspace}{\mathcal{\debug Y}}						
\newcommand{\unitvec}{\debug z}								
\newcommand{\state}{\debug x}
\newcommand{\dstate}{\debug y}
\DeclareMathOperator{\diam}{diam}						
\DeclareMathOperator{\vol}{vol}							
\newcommand{\base}{\debug p}								
\newcommand{\ball}{\mathbb{\debug B}}							
\newcommand{\sphere}{\mathbb{\debug S}}						
\newcommand{\radius}{\debug r}								
\newcommand{\cpt}{\mathcal{C}}							
\DeclareMathOperator{\tcone}{TC}							
\newcommand{\subd}{\partial}							
\newcommand{\feas}{\mathcal{\debug X}}							
\newcommand{\intfeas}{\feas^{\circ}}						
\newcommand{\Lip}{L}								
\newcommand{\obj}{f}								
\newcommand{\play}{\debug i}								
\newcommand{\playalt}{\debug j}								
\newcommand{\nPlayers}{\debug N}								
\newcommand{\players}{\mathcal{\debug \nPlayers}}					
\newcommand{\pay}{\debug u}								
\newcommand{\payv}{\debug v}								
\newcommand{\pot}{\debug f}								
\newcommand{\eq}{\sol}								
\newcommand{\eqset}{\debug \eq[\feas]}							
\newcommand{\game}{\mathcal{\debug G}}						
\DeclareMathOperator{\Eucl}{\Pi}							
\newcommand{\breg}{\debug D}								
\newcommand{\hreg}{\debug h}								
\newcommand{\mirror}{\debug Q}								
\newcommand{\prox}{\debug P}								
\newcommand{\strong}{\debug K}								
\newcommand{\new}[1]{#1^{+}}							
\newcommand{\act}{\debug X}								
\newcommand{\pert}{\debug w}								
\newcommand{\step}{\debug \gamma}							
\newcommand{\mix}{\debug \delta}								
\newcommand{\unitvar}{\debug Z}
\newcommand{\pertvar}{\debug W}								
\DeclareMathOperator{\ex}{\mathbb{E}}						
\DeclareMathOperator{\prob}{\mathbb{P}}					
\newcommand{\as}{\textup(a.s.\textup)\xspace}					
\newcommand{\dkl}{\debug D_{\mathrm{KL}}}						
\newcommand{\filter}{\mathcal{\debug F}}							
\newcommand{\noise}{\debug U}								
\newcommand{\snoise}{\debug \xi}								
\newcommand{\noisedev}{\debug \sigma}							
\newcommand{\noisevar}{\noisedev^{2}}						
\newcommand{\bias}{\debug b}								
\newcommand{\sbias}{\debug r}							
\providecommand\given{}								
\DeclarePairedDelimiterXPP{\exof}[1]{\ex}{[}{]}{}{
\renewcommand\given{\nonscript\,\delimsize\vert\nonscript\,\mathopen{}} #1}
\DeclarePairedDelimiterXPP{\probof}[1]{\prob}{(}{)}{}{
\renewcommand\given{\nonscript\,\delimsize\vert\nonscript\,\mathopen{}} #1}
\DeclareMathOperator*{\argmax}{arg\,max}					
\DeclareMathOperator*{\argmin}{arg\,min}						
\DeclareMathOperator{\dom}{dom}							
\DeclareMathOperator{\one}{\mathds{1}}						
\newcommand{\from}{\colon}							
\newcommand{\start}{\debug 1}								
\newcommand{\running}{\debug 1,2,\dotsc}							
\newcommand{\run}{\debug n}								
\newcommand{\runalt}{\debug k}								
\newcommand{\dd}{\:d}								
\DeclareMathOperator{\del}{\debug \nabla\!}							
\newcommand{\eps}{\varepsilon}							
\newcommand{\pd}{\partial}							
\newcommand{\insum}{\sum\nolimits}						
\newcommand{\inprod}{\prod\nolimits}						
\theoremstyle{plain}
\newtheorem{theorem}{Theorem}							
\newtheorem*{corollary*}{Corollary}							
\newtheorem{lemma}[theorem]{Lemma}						
\newtheorem{proposition}[theorem]{Proposition}					
\theoremstyle{definition}
\newtheorem*{definition*}{Definition}						
\newtheorem*{assumption*}{Assumptions}						
\newtheorem{example}{Example}							
\newtheorem*{example*}{Example}							
\theoremstyle{remark}
\newtheorem*{remark*}{Remark}							
\numberwithin{theorem}{section}							
\numberwithin{example}{section}							
\newcommand{\resource}{\debug s}								
\newcommand{\nResources}{\debug S}							
\newcommand{\resources}{\mathcal{\nResources}}					
\newcommand{\vbound}{\debug V}
\newcommand{\pexp}{\debug p}
\newcommand{\qexp}{\debug q}
\newcommand{\str}{\debug \beta}
\newcommand{\seq}{\debug a}
\newcommand{\debug}[1]{#1}							
\newcommand{\revise}[1]{#1}					
\newcommand{\farbound}{\debug c}								
\newcommand{\bbound}{\debug B}								
\newcommand{\cbound}{\debug C}								
\newcommand{\pbound}{\debug P}
\newcommand{\qbound}{\debug Q}
\newcommand{\rbound}{\debug R}
\begin{document}


\title{Bandit Learning in Concave $\nPlayers$-Person Games}

\author
[M.~Bravo]
{Mario Bravo$^{\sharp}$}
\address{$^{\sharp}$ Universidad de Santiago de Chile, Departamento de Matemática y Ciencia de la Computación}
\EMAIL{mario.bravo.g@usach.cl}

\author
[D.~S.~Leslie]
{David S. Leslie$^{\ddag}$}
\address{$\ddag$ Lancaster University \& PROWLER.io}
\EMAIL{d.leslie@lancaster.ac.uk}

\author
[P.~Mertikopoulos]
{Panayotis Mertikopoulos$^{\ast}$}
\address{$\ast$ Univ. Grenoble Alpes, CNRS, Inria, LIG 38000 Grenoble, France}
\EMAIL{panayotis.mertikopoulos@imag.fr}

\subjclass[2010]{Primary 91A10, 91A26; secondary 68Q32, 68T02.}			
\keywords{
Bandit feedback;
concave games;
\acl{NE};
mirror descent.}

\thanks{
%
%
M.~Bravo gratefully acknowledges the support provided by FONDECYT grant 11151003.
P.~Mertikopoulos was partially supported by
the Huawei HIRP flagship grant ULTRON,
and
the French National Research Agency (ANR) grant
ORACLESS (ANR\textendash 16\textendash CE33\textendash 0004\textendash 01).
Part of this work was carried out with financial support by the ECOS project C15E03.}

\newcommand{\acdef}[1]{\textit{\acl{#1}} \textup{(\acs{#1})}\acused{#1}}		
\newcommand{\acdefp}[1]{\emph{\aclp{#1}} \textup(\acsp{#1}\textup)\acused{#1}}	

\newacro{KW}{Kiefer\textendash Wolfowitz}
\newacro{SFP}{stochastic fictitious play}
\newacro{CCE}{coarse correlated equilibrium}
\newacroplural{CCE}[CCE]{coarse correlated equilibria}
\newacro{OGD}{online gradient descent}
\newacro{APT}{asymptotic pseudotrajectory}
\newacroplural{APT}{asymptotic pseudotrajectories}
\newacro{MW}{multiplicative weights}
\newacro{KL}{Kullback\textendash Leibler}
\newacro{DGF}{distance-generating function}
\newacro{MP}{mirror-prox}
\newacro{SPSA}{simultaneous perturbation stochastic approximation}
\newacro{MDS}{martingale difference sequence}
\newacro{FTRL}{``following the regularized leader''}
\newacro{OMD}{online mirror descent}
\newacro{DSC}{diagonal strict concavity}
\newacro{GD}{gradient descent}
\newacro{MIMO}{multiple-input and multiple-output}
\newacro{MD}{mirror descent}
\newacro{DA}{dual averaging}
\newacro{LHS}{left-hand side}
\newacro{RHS}{right-hand side}
\newacro{iid}[i.i.d.]{independent and identically distributed}
\newacroplural{iid}[i.i.d.]{independent and identically distributed}
\newacro{NE}{Nash equilibrium}
\newacroplural{NE}[NE]{Nash equilibria}
\newacro{VI}{variational inequality}
\newacroplural{VI}{variational inequalities}

\begin{abstract}
%
%
This paper examines the long-run behavior of learning with bandit feedback in non-cooperative concave games.
The bandit framework accounts for extremely low-information environments where the agents may not even know they are playing a game;
as such, the agents' most sensible choice in this setting would be to employ a no-regret learning algorithm.
In general, this does not mean that the players' behavior stabilizes in the long run:
no-regret learning may lead to cycles, even with perfect gradient information.
However, if a standard monotonicity condition is satisfied, our analysis shows that no-regret learning based on \acl{MD} with bandit feedback converges to \acl{NE} with probability $1$.
We also derive an
upper bound for the convergence rate of the process
that nearly matches the best attainable rate for \emph{single-agent} bandit stochastic optimization.
\end{abstract}
\maketitle

\acresetall										

\section{Introduction}
\label{sec:introduction}

The bane of decision-making in an unknown environment is \emph{regret:}
noone wants to realize in hindsight that the decision policy they employed was strictly inferior to a plain policy prescribing the same action throughout.
For obvious reasons, this issue becomes considerably more intricate when the decision-maker is subject to situational uncertainty and the ``fog of war'':
when the only information at the optimizer's disposal is the reward obtained from a given action (the so-called ``bandit'' framework), is it even possible to design a no-regret policy?
Especially in the context of online convex optimization (repeated decision problems with continuous action sets and convex costs), this problem becomes even more challenging because the decision-maker typically needs to infer gradient information from the observation of a single scalar.
Nonetheless, despite this extra degree of difficulty, this question has been shown to admit a positive answer:
regret minimization \emph{is} possible, even with bandit feedback \citep{Kle04,FKM05}.

In this paper, we consider a multi-agent extension of this framework where, at each stage $\run=\running$, of a repeated decision process, the reward of an agent is determined by the actions of all agents via a fixed mechanism:
\emph{a non-cooperative $\nPlayers$-person game.}
In general, the agents \textendash\ or players \textendash\ might be completely oblivious to this mechanism, perhaps even ignoring its existence:
for instance, when choosing how much to bid for a good in an online auction, an agent is typically unaware of who the other bidders are, what are their specific valuations, etc.
Hence, lacking any knowledge about the game, it is only natural to assume that agents will at least seek to achieve a minimal worst-case guarantee and minimize their regret.
As a result, a fundamental question that arises is
\begin{inparaenum}
[\itshape a\upshape)]
\item
whether the agents' sequence of actions stabilizes to a rationally admissible state under no-regret learning;
and
\item
if it does, whether convergence is affected by the information available to the agents.
\end{inparaenum}

\subsection*{Related work\afterhead}

In finite games, no-regret learning guarantees that the players' time-averaged, empirical frequency of play converges to the game's set of \acp{CCE}, and the rate of this convergence is $\bigoh(1/\run)$ for $(\lambda,\mu)$-smooth games \citep{SALS15,FLST16}.
In general however, this set might contain highly subpar, rationally inadmissible strategies:
for instance, \cite{VZ13} provide examples of \ac{CCE} that assign positive selection probability \emph{only} to strictly dominated strategies.
In the class of potential games, \cite{CHM17-NIPS} recently showed that the \emph{actual} sequence of play (\ie the sequence of actions that determine the agents' rewards at each stage) converges under no-regret learning, even with bandit feedback.
Outside this class however, the players' chosen actions may cycle in perpetuity, even in simple, two-player zero-sum games with full information \citep{MPP18,MLZF+18};
in fact, depending on the parameters of the players' learning process, agents could even exhibit a fully unpredictable, aperiodic and chaotic behavior \citep{PPP17}.
As such, without further assumptions in place, no-regret learning in a multi-agent setting does not necessarily imply convergence to a unilaterally stable, equilibrium state.

In the broader context of games with continuous action sets (the focal point of this paper), the long-run behavior of no-regret learning is significantly more challenging to analyze.
In the case of mixed-strategy learning, \cite{PL14} and \cite{PML17} showed that mixed-stratgy learning based on \acl{SFP} converges to an $\eps$-perturbed \acl{NE} in potential games (but may lead to as much as $\bigoh(\eps\run)$ regret in the process).
More relevant for our purposes is the analysis of \cite{Nes09} who showed that the time-averaged sequence of play induced by a no-regret \ac{DA} process with noisy gradient feedback converges to \acl{NE} in monotone games (a class which, in turn, contains all concave potential games).

\revise{The closest antecedent to our approach is the recent work of \cite{MZ18} who showed that the \emph{actual} sequence of play generated by \acl{DA} converges to \acl{NE} in the class of variationally stable games (which includes all monotone games).
To do so, the authors
first showed that a naturally associated continuous-time dynamical system converges, and then used the so-called \acdef{APT} framework of \cite{Ben99} to translate this result to discrete time.
Similar \ac{APT} techniques were also used in a very recent preprint by \cite{BBF18} to establish the convergence of a \emph{payoff-based} learning algorithm in two classes of one-dimensional concave games:
games with strategic complements,
and ordinal potential games with isolated equilibria.
The algorithm of \cite{BBF18} can be seen as a special case of \acl{MD} coupled with a two-point gradient estimation process, suggesting several interesting links with our paper.}

\subsection*{Our contributions\afterhead}

In this paper, we drop all feedback assumptions and we focus on the \emph{bandit} framework where the only information at the players' disposal is the payoffs they receive at each stage.
As we discussed above, this lack of information complicates matters considerably because players must now estimate their payoff gradients from their observed rewards.
What makes matters even worse is that an agent may introduce a significant bias in the (concurrent) estimation process of another,
{so traditional, multiple-point estimation techniques for derivative-free optimization cannot be applied (at least, not without significant communication overhead between players).}

To do away with player coordination requirements, we focus on learning processes which could be sensibly deployed in a single-agent setting and we show that, in monotone games, the sequence of play induced by a wide class of no-regret learning policies converges to \acl{NE} with probability $1$.
Furthermore, by specializing to the class of strongly monotone games, we show that the rate of convergence is $\bigoh(\run^{-1/3})$, \ie it is nearly optimal with respect to the attainable $\bigoh(\run^{-1/2})$ rate for bandit, \emph{single-agent} stochastic optimization with strongly convex and smooth objectives \citep{ADX10,Sha13}.

\revise{We are not aware of a similar \acl{NE} convergence result for concave games with general convex action spaces and \emph{bandit} feedback:
the analysis of \cite{MZ18} requires first-order feedback, while the analysis of \cite{BBF18} only applies to one-dimensional games.}
We find this outcome particularly appealing for practical applications of game theory (\eg in network routing) because it shows that in a wide class of (possibly very complicated) nonlinear games, the \acl{NE} prediction does not require full rationality, common knowledge of rationality, 
flawless execution, or even the knowledge that a game is being played:
a commonly-used, individual no-regret algorithm suffices.

\section{Problem setup and preliminaries}
\label{sec:prelims}

\subsection{Concave games\afterhead}

Throughout this paper, we will focus on games with a finite number of players $\play\in\players = \{1,\dotsc,\nPlayers\}$ and continuous action sets.
During play, every player $\play\in\players$ selects an \emph{action} $\state_{\play}$ from a compact convex subset $\feas_{\play}$ of a $\vdim_{\play}$-dimensional normed space $\vecspace_{\play}$;
subsequently, based on each player's individual objective and the \emph{action profile} $\state = (\state_{\play};\state_{-\play}) \equiv (\state_{1},\dotsc,\state_{\nPlayers})$ of all players' actions, every player receives a \emph{reward}, and the process repeats.
In more detail, writing $\feas \equiv \prod_{\play} \feas_{i}$ for the game's \emph{action space}, we assume that each player's reward is determined by an associated \emph{payoff} (or \emph{utility}) \emph{function} $\pay_{\play}\from\feas\to\R$.
Since players are not assumed to ``know the game'' (or even that they are involved in one) these payoff functions might be a priori unknown, especially with respect to the dependence on the actions of other players.
Our only structural assumption for $\pay_{\play}$ will be that
$\pay_{\play}(\state_{\play};\state_{-\play})$ is concave in $\state_{\play}$ for all $\state_{-\play}\in\feas_{-\play} \equiv \inprod_{\playalt\neq\play} \feas_{\playalt}$, $\play\in\players$.

With all this in hand, a \emph{concave game} will be a tuple $\game \equiv \game(\players,\feas,\pay)$ with players, action spaces and payoffs defined as above.
Below, we briefly discuss some examples thereof:

\begin{example}[Cournot competition]
\label{ex:game-Cournot}
In the standard Cournot oligopoly model, there is a finite set of \emph{firms} indexed by $\play=1,\dotsc,\nPlayers$, each supplying the market with a quantity $\state_{\play}\in[0,C_{\play}]$ of some good (or service), up to the firm's production capacity $C_{\play}$.
By the law of supply and demand, the good is priced as a decreasing function
$P(\state_{\mathrm{tot}})$ of the total amount $\state_{\mathrm{tot}} = \sum_{\play=1}^{\nPlayers} \state_{\play}$ supplied to the market, typically following a linear model of the form $P(\state_{\mathrm{tot}}) = a - b\state_{\mathrm{tot}}$ for positive constants $a,b>0$.
The utility of firm $\play$ is then given by
\begin{equation}
\label{eq:pay-Cournot}
\pay_{\play}(\state_{\play};\state_{-\play})
	= \state_{\play} P(\state_{\mathrm{tot}}) - c_{\play} \state_{\play},
\end{equation}
\ie it comprises the total revenue from producing $\state_{\play}$ units of the good in question minus the associated production cost (in the above,  $c_{\play}>0$ represents the marginal production cost of firm $\play$).
\end{example}

\begin{example}[Resource allocation auctions]
\label{ex:game-auction}
Consider a service provider with a number of splittable \emph{resources} $\resource \in \resources = \{1,\dotsc,\nResources\}$ (bandwidth, server time, GPU cores, etc.).
These resources can be leased to a set of $\nPlayers$ bidders (players) who can place monetary bids $\state_{\play\resource}\geq0$ for the utilization of each resource $\resource\in\resources$ up to each player's total budget $b_{\play}$, \ie $\sum_{\resource\in\resources} \state_{\play\resource} \leq b_{\play}$.
Once all bids are in, resources are allocated proportionally to each player's bid, \ie the $\play$-th player gets
\(
\rho_{\play\resource}
	= \parens{q_{\resource} \state_{\play\resource}} \big/ \parens{c_{\resource} + \sum_{\playalt\in\players} \state_{\playalt\resource}}
\)
units of the $\resource$-th resource (where $q_{\resource}$ denotes the available units of said resource and $c_{\resource}\geq0$ is the ``entry barrier'' for bidding on it).
A simple model for the utility of player $\play$ is then given by
\begin{equation}
\label{eq:pay-auction}
\pay_{\play}(\state_{\play};\state_{-\play})
	= \sum_{\resource\in\resources} \bracks{g_{\play} \rho_{\play\resource} - \state_{\play\resource}},
\end{equation}
with
$g_{\play}$ denoting the marginal gain of player $\play$ from acquiring a unit slice of resources.
\end{example}

\subsection{\acl{NE} and monotone games\afterhead}

The most widely used solution concept for non-cooperative games is that of a \acdef{NE}, defined here as any action profile $\eq\in\feas$ that is resilient to unilateral deviations, viz.
\begin{equation}
\label{eq:NE}
\tag{NE}
\pay_{\play}(\eq_{\play};\eq_{-\play})
	\geq \pay_{\play}(\state_{\play};\eq_{-\play})
	\quad
	\text{for all $\state_{\play}\in\feas_{\play}$, $\play\in\players$}.
\end{equation}
By the classical existence theorem of \cite{Deb52}, every concave game admits a \acl{NE}.
Moreover, thanks to the individual concavity of the game's payoff functions, \aclp{NE} can also be characterized via the first-order optimality condition
\begin{equation}
\label{eq:NE-var}
\braket{\payv_{\play}(\eq)}{\state_{\play} - \eq_{\play}}
	\leq 0
	\quad
	\text{for all $\state_{\play}\in\feas_{\play}$},
\end{equation}
where $\payv_{\play}(\state)$ denotes the individual payoff gradient of the $\play$-th player,
\ie
\begin{equation}
\label{eq:payv}
\payv_{\play}(\state)
	= \del_{\play} \pay_{\play}(\state_{\play};\state_{-\play}),
\end{equation}
with $\del_{\play}$ denoting differentiation with respect to $\state_{\play}$.%
\footnote{We adopt here the standard convention of treating $\payv_{\play}(\state)$ as an element of the dual space $\dspace_{\play} \equiv \dual\vecspace_{\play}$ of $\vecspace_{\play}$, with $\braket{\dstate_{\play}}{\state_{\play}}$ denoting the duality pairing between $\dstate_{\play}\in\dspace_{\play}$ and $\state_{\play}\in\feas_{\play}\subseteq\vecspace_{\play}$.}
{In terms of regularity, it will be convenient to assume that each $\payv_{\play}$ is Lipschitz continuous;
to streamline our presentation, this will be our standing assumption in what follows.}

Starting with the seminal work of \cite{Ros65}, much of the literature on continuous games and their applications has focused on games that satisfy a condition known as \acdef{DSC}.
In its simplest form, this condition posits that there exist positive constants $\lambda_{\play}>0$ such that
\begin{equation}
\label{eq:DSC}
\tag{DSC}
\sum_{\play\in\players} \lambda_{\play} \braket{\payv_{\play}(\alt \state) - \payv_{\play}(\state)}{\alt \state_{\play} - \state_{\play}}
	< 0
	\quad
	\text{for all $\state,\alt \state\in\feas$, $\state \neq \alt \state$}.
\end{equation}
Owing to the formal similarity between \eqref{eq:DSC} and the various operator monotonicity conditions in optimization \citep[see \eg][]{BC17}, games that satisfy \eqref{eq:DSC} are commonly referred to as (strictly) \emph{monotone}.
As was shown by \citet[Theorem~2]{Ros65}, monotone games admit a unique \acl{NE} $\eq\in\feas$, which, in view of \eqref{eq:DSC} and \eqref{eq:NE}, is also the unique solution of the (weighted) \acl{VI}
\begin{equation}
\label{eq:VI}
\tag{VI}
\sum_{\play\in\players} \lambda_{\play} \braket{\payv_{\play}(\state)}{\state_{\play} - \eq_{\play}}
	< 0
	\quad
	\text{for all $\state\neq\eq$}.
\end{equation}
This property of \aclp{NE} of monotone games will play a crucial role in our analysis and we will use it freely in the rest of our paper.

In terms of applications, monotonicity gives rise to a very rich class of games.
As we show in the paper's supplement, \cref{ex:game-Cournot,ex:game-auction} both satisfy \acl{DSC} (with a nontrivial choice of weights for the latter), as do atomic splittable congestion games in networks with parallel links \citep{ORShi93,SW16}, multi-user covariance matrix optimization problems in \ac{MIMO} systems \citep{MBNS17}, and many other problems where online decision-making is the norm.
Namely, the class of monotone games contains all strictly convex-concave zero-sum games and all games that admit a (strictly) concave \emph{potential}, \ie a function $\pot\from\feas\to\R$ such that
\(
\payv_{\play}(\state)
	= \del_{\play} \pot(\state)
\)
for all $\state\in\feas$, $\play\in\players$.
In view of all this (and unless explicitly stated otherwise),
we will focus throughout on monotone games;
for completeness, we also include in the supplement a straightforward second-order test for monotonicity.

\section{Regularized no-regret learning}
\label{sec:learning}

We now turn to the learning methods that players could employ to increase their individual rewards in an online manner.
Building on \citepos{Zin03} \acl{OGD} policy, the most widely used algorithmic schemes for no-regret learning in the context of online convex optimization invariably revolve around the idea of \emph{regularization}.
To name but the most well-known paradigms, \acf{FTRL} explicitly relies on best-responding to a regularized aggregate of the reward functions revealed up to a given stage, while \acf{OMD} and its variants use a linear surrogate thereof.
All these no-regret policies fall under the general umbrella of ``regularized learning'' and their origins can be traced back to the seminal \acdef{MD} algorithm of \cite{NY83}.%
\footnote{In a utility maximization setting, \acl{MD} should be called mirror \emph{ascent} because players seek to \emph{maximize} their rewards (as opposed to \emph{minimizing} their losses).
Nonetheless, we keep the term ``descent'' throughout because, despite the role reversal, it is the standard name associated with the method.}

The basic idea of \acl{MD} is to generate a new feasible point $\new\state$ by taking a so-called ``mirror step'' from a starting point $\state$ along the direction of an ``approximate gradient'' vector $\dstate$ (which we treat here as an element of the dual space $\dspace\equiv\prod_{\play}\dspace_{\play}$ of $\vecspace\equiv\prod_{\play}\vecspace_{\play}$).%
\footnote{For concreteness (and in a slight abuse of notation), we assume in what follows that $\vecspace$ is equipped with
the product norm $\norm{\state}^{2} = \sum_{\play} \norm{\state_{\play}}^{2}$
and
$\dspace$ with the dual norm $\dnorm{\dstate} = \max\setdef{\braket{\dstate}{\state}}{\norm{\state} \leq 1}$.}
To do so, let $\hreg_{\play}\from\feas_{\play}\to\R$ be a
continuous
and
$\strong_{\play}$-strongly convex \emph{distance-generating} (or \emph{regularizer}) function, \ie
\begin{equation}
\label{eq:strong}
\hreg_{\play}(t\state_{\play} + (1-t) \alt\state_{\play})
	\leq t \hreg_{\play}(\state_{\play})
	+ (1-t) \hreg_{\play}(\alt\state_{\play})
	- \tfrac{1}{2} \strong_{\play} t(1-t) \norm{\alt\state_{\play} -\state_{\play}}^{2},
\end{equation}
for all $\state_{\play},\alt\state_{\play}\in\feas_{\play}$ and all $t\in[0,1]$.
In terms of smoothness (and in a slight abuse of notation) we also assume that the subdifferential of $\hreg_{\play}$ admits a \emph{continuous selection}, \ie a continuous function $\nabla\hreg_{\play}\from\dom\subd\hreg_{\play}\to\dspace_{\play}$ such that $\nabla\hreg_{\play}(\state_{\play})\in\subd\hreg_{\play}(\state_{\play})$ for all $\state_{\play}\in\dom\subd\hreg_{\play}$.%
\footnote{Recall here that the subdifferential of $\hreg_{\play}$ at $\state_{\play}\in\feas_{\play}$ is defined as
\(
\subd\hreg_{\play}(\state_{\play})
	\equiv \setdef
		{\dstate_{\play}\in\dspace_{\play}}
		{\hreg_{\play}(\alt\state_{\play}) \geq \hreg_{\play}(\state_{\play}) + \braket{\dstate_{\play}}{\alt\state_{\play} -\state_{\play}} \;\text{for all}\; \alt\state_{\play}\in\vecspace_{\play}},
\)
with the standard convention that $\hreg_{\play}(\state_{\play}) = +\infty$ if $\state_{\play}\in\vecspace_{\play}\setminus\feas_{\play}$.
By standard results, the domain of subdifferentiability $\subd\hreg_{\play} \equiv \setdef{\state_{\play}\in\feas_{\play}}{\subd\hreg_{\play}\neq\varnothing}$ of $\hreg_{\play}$ satisfies $\intfeas_{\play} \subseteq \dom\subd\hreg_{\play} \subseteq \feas_{\play}$.}
Then, letting $\hreg(\state) = \sum_{\play} \hreg_{\play}(\state_{\play})$ for $\state\in\feas$ (so $\hreg$ is strongly convex with modulus $\strong = \min_{\play} \strong_{\play}$), we get a \emph{pseudo-distance} on $\feas$ via the relation
\begin{equation}
\label{eq:Bregman}
\breg(\base,\state)
	= \hreg(\base) - \hreg(\state) - \braket{\nabla\hreg(\state)}{\base -\state},
\end{equation}
for all $\base\in\feas$, $\state\in\dom\subd\hreg$.

This pseudo-distance is known as the \emph{Bregman divergence} and we have $\breg(\base,\state) \geq 0$ with equality if and only if $\state = \base$;
on the other hand, $\breg$ may fail to be symmetric and/or satisfy the triangle inequality so, in general, it is not a bona fide distance function on $\feas$.
Nevertheless, we also have $\breg(\base,\state) \geq \frac{1}{2} \strong \norm{\state - \base}^{2}$ (see the paper's supplement), so the convergence of a sequence $\act_{\run}$ to $\base$ can be checked by showing that $\breg(\base,\act_{\run}) \to 0$.
For technical reasons, it will be convenient to also assume the converse, \ie that $\breg(\base,\act_{\run}) \to 0$ when $\act_{\run} \to \base$.
This condition is known in the literature as ``Bregman reciprocity'' \citep{CT93}, and it will be our blanket assumption in what follows (note that it is trivially satisfied by \cref{ex:Eucl,ex:entropic} below).

Now, as with true Euclidean distances, $\breg(\base,\state)$ induces a \emph{prox-mapping} given by
\begin{equation}
\label{eq:prox}
\prox_{\state}(\dstate)
	= \argmin_{\alt\state\in\feas} \{\braket{\dstate}{\state - \alt\state} + \breg(\alt\state,\state)\}
\end{equation}
for all $\state\in\dom\subd\hreg$ and all $\dstate\in\dspace$.
Just like its Euclidean counterpart below, the prox-mapping \eqref{eq:prox} starts with a point $\state\in\dom\subd\hreg$ and steps along the dual (gradient-like) vector $\dstate\in\dspace$ to produce a new feasible point $\new\state = \prox_{\state}(\dstate)$.
Standard examples of this process are:
\smallskip

\begin{example}[Euclidean projections]
\label{ex:Eucl}
Let $\hreg(\state) = \frac{1}{2} \norm{\state}_{2}^{2}$ denote the Euclidean squared norm.
Then,
the induced prox-mapping is
\begin{equation}
\label{eq:prox-Eucl}
\prox_{\state}(\dstate)
	= \Eucl(\state + \dstate),
\end{equation}
with $\Eucl(\state) = \argmin_{\alt\state\in\feas} \norm{\alt\state -\state}^{2}$ denoting the standard Euclidean projection onto $\feas$.
Hence, the update rule $\new\state = \prox_{\state}(\dstate)$ boils down to a ``vanilla'', Euclidean projection step along $\dstate$.
\end{example}
\smallskip

\begin{example}[Entropic regularization and multiplicative weights]
\label{ex:entropic}
Suppressing the player index for simplicity, let $\feas$ be a $\vdim$-dimensional simplex and consider the entropic regularizer $\hreg(\state) = \sum_{j=1}^{\vdim} \state_{j} \log \state_{j}$.
The induced pseudo-distance is the so-called \acdef{KL} divergence
\(
\dkl(\base,\state)
	= \sum_{j=1}^{\vdim} \base_{j} \log \parens{\base_{j} / \state_{j}},
\)
which gives rise to the prox-mapping
\begin{equation}
\label{eq:prox-entropic}
\prox_{\state}(\dstate)
	= \frac{(\state_{j} \exp(\dstate_{j}))_{j=1}^{\vdim}}{\sum_{j=1}^{\vdim} \state_{j} \exp(\dstate_{j})}
\end{equation}
for all $\state\in\intfeas$, $\dstate\in\dspace$.
The update rule $\new\state = \prox_{\state}(\dstate)$ is widely known as the \acdef{MW} algorithm and plays a central role for learning in multi-armed bandit problems and finite games \citep{ACBFS95,FS99,AHK12}.
\end{example}
\smallskip

With all this in hand, the multi-agent \acdef{MD} algorithm is given by the recursion
\begin{equation}
\label{eq:MD}
\tag{MD}
\act_{\run+1}
	= \prox_{\act_{\run}}(\step_{\run} \est\payv_{\run}),
\end{equation}
where $\step_{\run}$ is a variable step-size sequence
and
$\est\payv_{\run} = (\est\payv_{\play,\run})_{\play\in\players}$ is a generic feedback sequence of estimated gradients.
In the next section, we detail how this sequence is generated with first- or zeroth-order (bandit) feedback.

%
%

\section{First-order vs. bandit feedback}
\label{sec:feedback}

\subsection{First-order feedback\afterhead}

A common assumption in the literature is that players are able to obtain gradient information by querying a \emph{first-order oracle} \citep{Nes04}.
\ie a ``black-box'' feedback mechanism that outputs an estimate $\est\payv_{\play}$ of the individual payoff gradient $\payv_{\play}(\state)$ of the $\play$-th player at the current action profile $\state = (\state_{\play};\state_{-\play})\in\feas$.
This estimate could be either \emph{perfect}, giving $\est\payv_{\play} = \payv_{\play}(\state)$ for all $\play\in\players$, or \emph{imperfect}, returning noisy information of the form
\(
\est\payv_{\play}
	= \payv_{\play}(\state) + \noise_{\play}
\)
where $\noise_{\play}$ denotes the oracle's error (random, systematic, or otherwise).

Having access to a perfect oracle is usually a tall order,
either because payoff gradients are difficult to compute directly (especially without global knowledge),
because they involve an expectation over a possibly unknown probability law,
or for any other number of reasons.
It is therefore more common to assume that each player has access to a \emph{stochastic oracle} which, when called against a sequence of actions $\act_{\run}\in\feas$, produces a sequence of gradient estimates $\est\payv_{\run} = (\payv_{\play,\run})_{\play\in\players}$ that satisfies the following statistical assumptions:
\begin{equation}
\label{eq:oracle}
\begin{aligned}
&a)\;
	\textit{Unbiasedness:}
	&
	&\exof{\est\payv_{\run} \given \filter_{\run}}
		= \payv(\act_{\run}).
	\\
&b)\;
	\textit{Finite mean square:}
	&
	&\exof{\dnorm{\est\payv_{\run}}^{2} \given \filter_{\run}}
		\leq \vbound^{2}
		\;\;
		\text{for some finite $\vbound\geq0$}.
		\hspace{6em}
\end{aligned}
\end{equation}
In terms of measurability, the expectation in \eqref{eq:oracle} is conditioned on the history $\filter_{\run}$ of
$\act_{\run}$ up to stage $\run$;
in particular, since $\est\payv_{\run}$ is generated randomly from $\act_{\run}$, it is not $\filter_{\run}$-measurable (and hence not adapted).
To make this more transparent, we will write
\(
\est\payv_{\run}
	= \payv(\act_{\run}) + \noise_{\run+1}
\)
where $\noise_{\run}$ is
an adapted \acl{MDS} with $\exof{\dnorm{\noise_{\run+1}}^{2} \given \filter_{\run}} \leq \noisevar$ for some finite $\noisedev \geq 0$.
\smallskip


\subsection{Bandit feedback\afterhead}

Now, if players don't have access to a first-order oracle \textendash\ the so-called \emph{bandit} or \emph{payoff-based} framework \textendash\ they will need to derive an individual gradient estimate from the only information at their disposal:
the actual payoffs they receive at each stage.
When a function can be queried at multiple points (as few as two in practice), there are efficient ways to estimate its gradient via directional sampling techniques as in \cite{ADX10}.
In a game-theoretic setting however, multiple-point estimation techniques do not apply because,
in general, a player's payoff function depends on the actions of \emph{all} players.
Thus, when a player attempts to get a second query of their payoff function, this function may have already changed due to the query of another player \textendash\ \ie instead of sampling $\pay_{\play}(\cdot;\state_{-\play})$, the $\play$-th player would be sampling $\pay_{\play}(\cdot;\alt \state_{-\play})$ for some $\alt\state_{-\play} \neq \state_{-\play}$.

Following \cite{Spa97} and \cite{FKM05}, we posit instead that players rely on a \acf{SPSA} approach that allows them to estimate their individual payoff gradients $\payv_{\play}$ based off a \emph{single} function evaluation.
In detail, the key steps of this one-shot estimation process for each player $\play\in\players$ are:
\begin{enumerate}
\setcounter{enumi}{-1}
\item
Fix a \emph{query radius} $\mix>0$.%
\footnote{For simplicity, we take $\mix$ equal for all players;
the extension to player-specific $\mix$ is straightforward, so we omit it.}
\item
Pick a \emph{pivot point} $\state_{\play}\in\feas_{\play}$ where player $\play$ seeks to estimate their payoff gradient.
\item
Draw a vector $\unitvec_{\play}$ from the unit sphere $\sphere_{\play}\equiv\sphere^{\vdim_{\play}}$ of $\vecspace_{\play}\equiv\R^{\vdim_{\play}}$ and play $\est \state_{\play} = \state_{\play} + \mix\unitvec_{\play}$.%
\footnote{We tacitly assume here that the query directions $\unitvec_{\play}\in\sphere^{\vdim_{\play}}$ are drawn independently across players.}
\item
Receive $\est\pay_{\play} = \pay_{\play}(\est \state_{\play};\est \state_{-\play})$ and set
\begin{equation}
\label{eq:SPSA}
\est\payv_{\play}
	= \frac{\vdim_{\play}}{\mix} \est\pay_{\play} \, \unitvec_{\play}.
\end{equation}
\end{enumerate}

By adapting a standard argument based on Stokes' theorem (detailed in the supplement),
it can be shown that $\est\payv_{\play}$ is an unbiased estimator of the individual gradient of the $\mix$-smoothed payoff function
\begin{equation}
\label{eq:smoothed}
\pay_{\play}^{\mix}(\state)
	= \frac{1}{\vol(\mix\ball_{\play}) \prod_{\playalt\neq\play} \vol(\mix\sphere_{\playalt})}
	\int_{\mix\ball_{\play}} \int_{\prod_{\playalt\neq\play} \mix\sphere_{\playalt}}
		\pay_{\play}(\state_{\play} + \pert_{\play};\state_{-\play} + \unitvec_{-\play})\,
		\dd\unitvec_{1} \dotsm d\pert_{\play} \dotsm d\unitvec_{\nPlayers}
\end{equation}
with $\ball_{\play}\equiv\ball^{\vdim_{\play}}$ denoting the unit ball of $\vecspace_{\play}$.
The Lipschitz continuity of $\payv_{\play}$ guarantees that $\norm{\del_{\play}\pay_{\play} - \del_{\play}\pay_{\play}^{\mix}}_{\infty} = \bigoh(\mix)$,
so this estimate becomes more and more accurate as $\mix\to0^{+}$.
On the other hand, the second moment of $\est\payv_{\play}$ grows as
\(
	\bigoh\parens{1/\mix^{2}},
\)
implying in turn that the variability of $\est\payv_{\play}$ grows unbounded as $\mix\to0^{+}$.
This manifestation of the bias-variance dilemma plays a crucial role in designing no-regret policies with bandit feedback \citep{FKM05,Kle04}, so $\mix$ must be chosen with care.

Before dealing with this choice though, it is important to highlight two feasibility issues that arise with the single-shot \ac{SPSA} estimate \eqref{eq:SPSA}.
The first has to do with the fact that the perturbation direction $\unitvec_{\play}$ is chosen from the unit sphere $\sphere_{\play}$ so it may fail to be tangent to $\feas_{\play}$, even when $\state_{\play}$ is interior.
To iron out this wrinkle, it suffices to sample $\unitvec_{\play}$ from the intersection of $\sphere_{\play}$ with the affine hull of $\feas_{\play}$ in $\vecspace_{\play}$;
on that account (and without loss of generality), we will simply assume in what follows that each $\feas_{\play}$ is a \emph{convex body} of $\vecspace_{\play}$, \ie it has nonempty topological interior.

The second feasibility issue concerns the size of the perturbation step:
even if $\unitvec_{\play}$ is a feasible direction of motion, the query point $\est \state_{\play} = \state_{\play} + \mix\unitvec_{\play}$ may be unfeasible if $\state_{\play}$ is too close to the boundary of $\feas_{\play}$.
For this reason, we will introduce a ``safety net'' in the spirit of \cite{ADX10}, and we will constrain the set of possible pivot points $\state_{\play}$ to lie within a suitably shrunk zone of $\feas$.

In detail, let $\ball_{\radius_{\play}}(\base_{\play})$ be an $\radius_{\play}$-ball centered at $\base_{\play}\in\feas_{\play}$ so that $\ball_{\radius_{\play}}(\base_{\play})\subseteq\feas_{\play}$.
Then, instead of perturbing $\state_{\play}$ by $\unitvec_{\play}$, we consider the \emph{feasibility adjustment}
\begin{equation}
\label{eq:pert}
\pert_{\play}
	= \unitvec_{\play} - \radius_{\play}^{-1}(\state_{\play} - \base_{\play}),
\end{equation}
and each player plays $\est \state_{\play} = \state_{\play} + \mix\pert_{\play}$ instead of $\state_{\play} + \mix\unitvec_{\play}$.
In other words, this adjustment moves each pivot to
$\state_{\play}^{\mix} = \state_{\play} - \radius_{\play}^{-1}\mix (\state_{\play} - \base_{\play})$,
\ie $\bigoh(\mix)$-closer to the interior base point $\base_{\play}$, and then perturbs $\state_{\play}^{\mix}$ by $\mix\unitvec_{\play}$.
Feasibility of the query point is then ensured by noting that
\begin{equation}
\label{eq:query}
\est \state_{\play}
	= \state_{\play}^{\mix} + \mix\unitvec_{\play}
	= (1 - \radius_{\play}^{-1}\mix) \state_{\play}
	+ \radius_{\play}^{-1}\mix (\base_{\play} + \radius_{\play} \unitvec_{\play}),
\end{equation}
so $\est \state_{\play}\in\feas_{\play}$ if $\mix/\radius_{\play}<1$ (since $\base_{\play} + \radius_{\play}\unitvec_{\play} \in \ball_{\radius_{\play}}(\base_{\play}) \subseteq \feas_{\play}$).

The difference between this estimator and the oracle framework we discussed above is twofold.
First, each player's \emph{realized} action is $\est \state_{\play} = \state_{\play} + \mix\pert_{\play}$, not $\state_{\play}$, so there is a disparity between the point at which payoffs are queried and the action profile where the oracle is called.
Second, the resulting estimator $\hat\payv$ is not unbiased, so the statistical assumptions \eqref{eq:oracle} for a stochastic oracle do not hold.
In particular, given the feasibility adjustment \eqref{eq:pert}, the estimate \eqref{eq:SPSA} with $\est \state$ given by \eqref{eq:query} satisfies
\begin{equation}
\label{eq:bias}
\exof{\est \payv_{\play}}
	= \del_{\play} \pay_{\play}^{\mix}(\state_{\play}^{\mix};\state_{-\play}^{\mix}),
\end{equation}
so there are \emph{two} sources of systematic error:
an $\bigoh(\mix)$ perturbation in the function, and an $\bigoh(\mix)$ perturbation of each player's pivot point from $\state_{\play}$ to $\state_{\play}^{\mix}$.
Hence, to capture both sources of bias and separate them from the random noise, we will write
\begin{equation}
\label{eq:grad-est}
\est\payv_{\play}
	= \payv_{\play}(\state)
	+ \noise_{\play}
	+ \bias_{\play}
\end{equation}
where $\noise_{\play} = \est\payv_{\play} - \exof{\est\payv_{\play}}$ and $\bias_{\play} = \del_{\play} \pay_{\play}^{\mix}(\state^{\mix}) - \del_{\play}\pay_{\play}(\state)$.
We are thus led to the following manifestation of the bias-variance dilemma:
the bias term $\bias$ in \eqref{eq:grad-est} is $\bigoh(\mix)$, but the second moment of the noise term $\noise$ is $\bigoh(1/\mix^{2})$;
as such, an increase in accuracy (small bias) would result in a commensurate loss of precision (large noise variance).
Balancing these two factors will be a key component of our analysis.

\section{Convergence analysis and results}
\label{sec:results}


\begin{algorithm}[tbp]
\caption{Multi-agent \acl{MD} with bandit feedback\; (player indices suppressed)}
\label{alg:MD-0}

\small
\tt
\begin{algorithmic}[1]
\Require
	step-size $\step_{\run}>0$,
	query radius $\mix_{\run}>0$,
	safety ball $\ball_{\radius}(\base)\subseteq\feas$
\State
	choose $\act\in\dom\subd\hreg$
	\Comment{initialization}%
\Repeat\;
	at each stage $\run=\running$
	\State	
		draw $\unitvar$ uniformly from $\sphere^{\vdim}$
		\Comment{perturbation direction}%
	\State
		set $\pertvar\leftarrow\unitvar - \radius^{-1}(\act - \base)$
		\Comment{query direction}%
	\State
		play $\est\act \leftarrow \act + \mix_{\run}\pertvar$
		\Comment{choose action}%
	\State
		receive $\est\pay \leftarrow \pay(\est\act)$
		\Comment{get payoff}%
	\State
		set $\est\payv\leftarrow (\vdim/\mix_{\run}) \est\pay \cdot \unitvar$
		\Comment{estimate gradient}%
	\State
		update $\act \leftarrow \prox_{\act}(\step_{\run}\est\payv)$
		\Comment{update pivot}%
\Until{end}
\end{algorithmic}
\end{algorithm}


Combining the learning framework of \cref{sec:learning} with the single-shot gradient estimation machinery of \cref{sec:feedback}, we obtain the following variant of \eqref{eq:MD} with payoff-based, \emph{bandit feedback:}
\begin{equation}
\label{eq:MD-0}
\tag{\ref*{eq:MD}-b}
\begin{aligned}
\est\act_{\run}
	&= \act_{\run} + \mix_{\run} \pertvar_{\run},
	\\
\act_{\run+1}
	&= \prox_{\act_{\run}}(\step_{\run}\est\payv_{\run}).
\end{aligned}
\end{equation}
In the above, the perturbations $\pertvar_{\run}$ and the estimates $\est\payv_{\run}$ are given respectively by \eqref{eq:pert} and \eqref{eq:SPSA}, \ie
\begin{equation}
\pertvar_{\play,\run}
	= \unitvar_{\play,\run} - \radius_{\play}^{-1} \parens{\act_{\play,\run} - \base_{\play}}
	\qquad
\est\payv_{\play,\run}
	= (\vdim_{\play}/\mix_{\run}) \pay_{\play}(\est\act_{\run}) \, \unitvar_{\play,\run}
\end{equation}
and $\unitvar_{\play,\run}$ is drawn independently and uniformly across players at each stage $\run$
(see also \cref{alg:MD-0} for a pseudocode implementation and \cref{fig:MD-0} for a schematic representation).

In the rest of this paper, our goal will be to determine the equilibrium convergence properties of this scheme in concave $\nPlayers$-person games.
Our first asymptotic result below shows that, under \eqref{eq:MD-0}, the players' learning process converges to \acl{NE} in monotone games:


\begin{figure}[tbp]
\centering

\begin{tikzpicture}
[scale=1,
nodestyle/.style = {circle,fill=black,inner sep=0, minimum size=2.5}]
\footnotesize
\def\angleOne{5}
\def\angleTwo{-50}

\draw [thick, color = MyBlue, pattern = north east lines, pattern color = MyBlue!10] (105:3) arc (105:-75:3);
\draw node at (.5,1) {\normalsize$\feas$};

\node [nodestyle, label = above:{$\act_{1}$}] (pivot1) at (60:2.25) {};
\node (score1) at (\angleOne:4.5) {};
\node [nodestyle, label = above:$\est\act_{1}$] (act1) at ($.75*(pivot1) + 0.25*(score1)$) {.};
\node [nodestyle] (score1) at (score1) {.};
\draw[-stealth,BrickRed] (pivot1.center) -- (score1.center) node [midway, right]{$\;\;\;\step(\vdim/\mix)\, \est\pay_{1} \unitvec_{1}$};
\draw[-stealth] (pivot1.center) -- (act1.center) node [midway, below]{$\mix \unitvec_{1}$};

\node [nodestyle, label = above left:$\act_{2}$] (pivot2) at (\angleOne:3) {.};
\draw[-stealth, densely dashed] (score1.center) -- (pivot2.center) node [midway, below]{$\Eucl$};

\node (score2) at (\angleTwo:2) {};
\node [nodestyle, label = below:$\quad\est\act_{2}$] (act2) at ($(pivot2) + 0.35*(score2) - 0.35*(pivot2)$) {.};
\node [nodestyle, label = below:$\act_{3}$] (score2) at (score2) {.};
\draw[-stealth,BrickRed] (pivot2.center) -- (score2.center) node [midway, left]{$\step(\vdim/\mix)\, \est\pay_{2} \unitvec_{2}$};
\draw[-stealth] (pivot2.center) -- (act2.center) node [midway, left]{$\mix \unitvec_{2}\;$};



\end{tikzpicture}
\caption{Schematic representation of \cref{alg:MD-0} with ordinary, Euclidean projections.
To reduce visual clutter, we did not include the feasibility adjustment $\radius^{-1}(\state - \base)$ in the action selection step $\act_{\run} \mapsto \est\act_{\run}$.}
\label{fig:MD-0}
\end{figure}


\begin{theorem}
\label{thm:convergence}
Suppose that the players of a monotone game $\game\equiv\game(\players,\feas,\pay)$ follow \eqref{eq:MD-0} with step-size $\step_{\run}$ and query radius $\mix_{\run}$ such that
\begin{equation}
\label{eq:params}
\lim_{\run\to\infty} \step_{\run}
	= \lim_{\run\to\infty} \mix_{\run}
	= 0,
	\quad
\sum_{\run=\start}^{\infty} \step_{\run}
	= \infty,
	\quad
\sum_{\run=\start}^{\infty} \step_{\run} \mix_{\run}
	< \infty,
	\quad
\text{and}
	\quad
\sum_{\run=\start}^{\infty} \frac{\step_{\run}^{2}}{\mix_{\run}^{2}}
	< \infty.
\end{equation}
Then, the sequence of realized actions $\est\act_{\run}$ converges to \acl{NE} with probability $1$.
\end{theorem}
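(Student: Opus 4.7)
The plan is to use a \emph{weighted Bregman divergence} $\breg_\lambda(\eq, \act_\run) = \sum_\play \lambda_\play \breg_\play(\eq_\play, \act_{\play,\run})$ as a stochastic Lyapunov function, with the weights $\lambda_\play$ inherited from \eqref{eq:DSC}, and to establish convergence via a Robbins--Siegmund-type quasi-descent inequality together with a standard stochastic-approximation argument. The unique \acl{NE} $\eq$ of the monotone game will play the role of the target point.

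The first step is the fundamental one-step mirror-descent inequality. Starting from the variational characterization of $\prox$ in \eqref{eq:prox} and using the $\strong_\play$-strong convexity of $\hreg_\play$, a three-point identity yields, for each player $\play$ and any $\base_\play \in \feas_\play$,
\begin{equation*}
\breg_\play(\base_\play, \act_{\play,\run+1}) \leq \breg_\play(\base_\play, \act_{\play,\run}) + \step_\run \braket{\est\payv_{\play,\run}}{\act_{\play,\run} - \base_\play} + \frac{\step_\run^2}{2\strong_\play} \dnorm{\est\payv_{\play,\run}}^2.
\end{equation*}
Weighting by $\lambda_\play$, summing over $\play$, taking $\base = \eq$, and substituting the decomposition $\est\payv_\run = \payv(\act_\run) + \noise_{\run+1} + \bias_\run$ from \eqref{eq:grad-est} splits the linear term into three pieces whose sign and magnitude I analyze next.

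The deterministic piece $\sum_\play \lambda_\play \braket{\payv_\play(\act_\run)}{\act_{\play,\run} - \eq_\play}$ is \emph{strictly negative} away from $\eq$ by \eqref{eq:VI}; denote it $-\psi(\act_\run)$, with $\psi$ continuous (since $\payv$ is Lipschitz), nonnegative, and vanishing only at $\eq$. The ``noise'' piece $\step_\run\sum_\play \lambda_\play\braket{\noise_{\play,\run+1}}{\act_{\play,\run}-\eq_\play}$ is a martingale increment with respect to $\filter_\run$ whose conditional second moment is $\bigoh(\step_\run^2/\mix_\run^2)$. The bias piece is deterministically $\bigoh(\step_\run\mix_\run)$. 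Finally, $\exof{\dnorm{\est\payv_\run}^2 \given \filter_\run} = \bigoh(1/\mix_\run^2)$, so the quadratic remainder is $\bigoh(\step_\run^2/\mix_\run^2)$. Under \eqref{eq:params} all three error terms are summable: the martingale by an $L^2$ convergence argument using $\sum \step_\run^2/\mix_\run^2 < \infty$, and the bias by $\sum \step_\run\mix_\run < \infty$. Putting everything together gives a recursion of the form
\begin{equation*}
\exof{\breg_\lambda(\eq,\act_{\run+1}) \given \filter_\run} \leq \breg_\lambda(\eq, \act_\run) - \step_\run \psi(\act_\run) + \zeta_\run, \qquad \sum_\run \zeta_\run < \infty \;\;\text{a.s.},
\end{equation*}
to which Robbins--Siegmund applies, yielding simultaneously that $\breg_\lambda(\eq,\act_\run)$ converges a.s. and $\sum_\run \step_\run \psi(\act_\run) < \infty$ a.s. Since $\sum_\run \step_\run = \infty$, this forces $\liminf_\run \psi(\act_\run) = 0$; by compactness of $\feas$ and continuity of $\psi$, every accumulation point of $\act_\run$ lies in $\{\psi = 0\} = \{\eq\}$, so a subsequence converges to $\eq$. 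Bregman reciprocity then gives $\breg_\lambda(\eq,\act_{\run_k}) \to 0$ along that subsequence, and a.s.\ convergence of the full sequence pins its limit at $0$, hence $\act_\run \to \eq$ a.s. Since $\est\act_\run = \act_\run + \mix_\run \pertvar_\run$ with $\mix_\run \to 0$ and $\pertvar_\run$ uniformly bounded, the realized actions also converge to $\eq$.

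The hardest part is the careful book-keeping of the bias/variance decomposition of the single-shot \ac{SPSA} gradient. One must verify that, conditionally on $\filter_\run$, the ``noise'' $\noise_{\run+1}$ is genuinely a mean-zero martingale increment with the claimed $\bigoh(1/\mix_\run^2)$ second moment, and that the deterministic $\bigoh(\mix_\run)$ bias incurred by the feasibility adjustment \eqref{eq:query} is handled separately; it is exactly this trade-off that dictates the joint summability conditions in \eqref{eq:params}. A secondary subtlety is the upgrade from subsequential convergence to full-sequence convergence, which crucially uses the a.s.\ convergence of $\breg_\lambda(\eq,\act_\run)$ provided by Robbins--Siegmund together with Bregman reciprocity.
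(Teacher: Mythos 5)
Your proposal is correct and follows the same overall architecture as the paper's proof: the (weighted) Bregman divergence to the equilibrium serves as the energy function, the one-step prox inequality combined with the decomposition $\est\payv_{\run} = \payv(\act_{\run}) + \noise_{\run+1} + \bias_{\run}$ yields a quasi-descent recursion with summable errors, a subsequence is extracted that converges to the equilibrium, and Bregman reciprocity plus the a.s.\ convergence of the divergence upgrades this to convergence of the full sequence; the passage from $\act_{\run}$ to $\est\act_{\run}$ is handled identically. The one genuine difference is in the packaging: the paper splits the argument into two propositions \textemdash\ a hand-built supermartingale $R_{\run} = \breg_{\run} + \sum_{\runalt\geq\run}[\bbound\step_{\runalt}\mix_{\runalt} + (2\strong)^{-1}\vbound^{2}\step_{\runalt}^{2}/\mix_{\runalt}^{2}]$ for the convergence of $\breg_{\run}$, and a separate proof by contradiction (telescoping the recursion on a compact set where the drift is bounded away from zero, controlling the errors via the law of large numbers for martingale difference sequences and Doob's submartingale theorem) for the subsequential convergence \textemdash\ whereas you invoke Robbins\textendash Siegmund once and obtain both conclusions simultaneously, with the summability $\sum_{\run}\step_{\run}\psi(\act_{\run}) < \infty$ replacing the contradiction argument. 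This is a legitimate and arguably cleaner route; the conditional-expectation form you use also lets the martingale term vanish outright rather than requiring a separate strong-law estimate. One small imprecision: $\liminf_{\run}\psi(\act_{\run}) = 0$ does \emph{not} imply that every accumulation point of $\act_{\run}$ lies in $\{\psi = 0\}$, only that some subsequence has $\psi(\act_{\run_{k}}) \to 0$ and hence (by compactness and continuity) admits a further subsequence converging to $\eq$ \textemdash\ but that weaker conclusion is all your argument actually uses, so nothing breaks.
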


Even though the setting is different, the conditions \eqref{eq:params} for the tuning of the algorithm's parameters are akin to those encountered in \acl{KW} stochastic approximation schemes and serve a similar purpose.
First, the conditions $\lim_{\run\to\infty} \step_{\run} = 0$ and $\sum_{\run=\start}^{\infty} \step_{\run} = \infty$ respectively mitigate the method's inherent randomness and ensure a horizon of sufficient length.
The requirement $\lim_{\run\to\infty} \mix_{\run} = 0$ is also straightforward to explain:
as players accrue more information, they need to decrease the sampling bias in order to have any hope of converging.
However, as we discussed in \cref{sec:feedback}, decreasing $\mix$ also increases the variance of the players' gradient estimates, which might grow to infinity as $\mix\to0$.
The crucial observation here is that new gradients enter the algorithm with a weight of $\step_{\run}$ so the aggregate bias after $\run$ stages is of the order of $\bigoh(\sum_{\runalt=\start}^{\run} \step_{\runalt}\mix_{\runalt})$ and its variance is $\bigoh(\sum_{\runalt=\start}^{\run} \step_{\runalt}^{2}/\mix_{\runalt}^{2})$.
If these error terms can be controlled, there is an underlying drift that emerges over time and which steers the process to equilibrium.
We make this precise in the supplement by using a suitably adjusted variant of the Bregman divergence as a quasi-Féjér energy function for \eqref{eq:MD-0} and relying on a series of (sub)martingale convergence arguments to establish the convergence of $\est\act_{\run}$ (first as a subsequence, then with probability $1$).


Of course, since \cref{thm:convergence} is asymptotic in nature, it is not clear how to choose $\step_{\run}$ and $\mix_{\run}$ so as to optimize the method's convergence rate.
Heuristically, if we take schedules of the form $\step_{\run} = \step/\run^{\pexp}$ and $\mix_{\run} = \mix/\run^{\qexp}$ with $\step,\mix>0$ and $0 < \pexp,\qexp \leq 1$, the only conditions imposed by \eqref{eq:params} are $\pexp+\qexp>1$ and $\pexp-\qexp>1/2$.
However, as we discussed above, the aggregate bias in the algorithm after $\run$ stages is $\bigoh(\sum_{\runalt=\start}^{\run} \step_{\run}\mix_{\run}) = \bigoh(1/\run^{\pexp+\qexp-1})$ and its variance is $\bigoh(\sum_{\runalt=\start}^{\run} \step_{\runalt}^{2}/\mix_{\runalt}^{2}) = \bigoh(1/\run^{2\pexp-2\qexp-1})$:
if the conditions \eqref{eq:params} are satisfied, both error terms vanish, but they might do so at very different rates.
By equating these exponents in order to bridge this gap, we obtain $\qexp = \pexp/3$;
moreover, since the single-shot \ac{SPSA} estimator \eqref{eq:SPSA} introduces a $\Theta(\mix_{\run})$ random perturbation, $\qexp$ should be taken as large as possible to ensure that this perturbation vanishes at the fastest possible rate.
As a result, the most suitable choice for $\pexp$ and $\qexp$ seems to be $\pexp=1$, $\qexp=1/3$, leading to an error bound of $\bigoh(1/\run^{1/3})$.

We show below that this bound is indeed attainable for games
that are \emph{strongly monotone}, \ie they
satisfy the following stronger variant of \acl{DSC}:
\begin{equation}
\label{eq:DSC-str}
\tag{$\str$-DSC}
\sum_{\play\in\players} \lambda_{\play} \braket{\payv_{\play}(\alt \state) - \payv_{\play}(\state)}{\alt \state_{\play} - \state_{\play}}
	\leq -\frac{\str}{2} \norm{\state - \alt\state}^{2}
\end{equation}
for some $\lambda_{\play},\str>0$ and for all $\state,\alt \state\in\feas$.
Focusing for expository reasons on the most widely used, Euclidean incarnation of the method (\cref{ex:Eucl}),
we have:

\begin{theorem}
\label{thm:rate}
Let $\eq$ be the \textpar{necessarily unique} \acl{NE} of a $\str$-strongly monotone game.
If the players follow \eqref{eq:MD-0} with Euclidean projections and parameters $\step_{\run} = \step/\run$ and $\mix_{\run} = \mix/\run^{1/3}$ with $\step > 1/(3\str)$ and $\mix>0$, we have
\begin{equation}
\label{eq:rate}
\exof{\norm{\est\act_{\run} - \eq}^{2}}
	= \bigoh(\run^{-1/3}).
\end{equation}
\end{theorem}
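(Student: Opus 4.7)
The plan is to establish a stochastic recursion for $E_\run = \exof{\norm{\act_\run - \eq}^{2}}$ (the squared distance of the \emph{pivots}, not yet the realized actions) of the form
\begin{equation*}
E_{\run+1}
\leq (1 - \step\str/\run)\,E_\run + \bigoh(\run^{-4/3})
\end{equation*}
and then invoke a classical numerical-sequence lemma. Since the Euclidean prox-map is ordinary projection, which is non-expansive, I start from
$\norm{\act_{\run+1} - \eq}^{2}
\leq \norm{\act_\run + \step_\run\est\payv_\run - \eq}^{2}
= \norm{\act_\run - \eq}^{2}
+ 2\step_\run\braket{\est\payv_\run}{\act_\run - \eq}
+ \step_\run^{2}\norm{\est\payv_\run}^{2}$.
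Taking $\filter_\run$-conditional expectation and decomposing $\est\payv_\run = \payv(\act_\run) + \bias_\run + \noise_\run$ as in \eqref{eq:grad-est}, the martingale part $\noise_\run$ drops out, \eqref{eq:DSC-str} combined with the first-order characterization of $\eq$ gives $\braket{\payv(\act_\run)}{\act_\run - \eq} \leq -\tfrac{\str}{2}\norm{\act_\run - \eq}^{2}$, and Cauchy--Schwarz together with $\norm{\act_\run - \eq} \leq \diam(\feas)$ yields $\braket{\bias_\run}{\act_\run - \eq} \leq \norm{\bias_\run}\,\diam(\feas) = \bigoh(\mix_\run)$.

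For the quadratic term, the deterministic bound $\norm{\est\payv_{\play,\run}} \leq \vdim_\play\,\sup\abs{\pay_\play}/\mix_\run$ (using only compactness of $\feas$ and $\norm{\unitvar_\play}=1$) gives $\norm{\est\payv_\run}^{2} = \bigoh(1/\mix_\run^{2})$. Plugging in $\step_\run = \step/\run$ and $\mix_\run = \mix/\run^{1/3}$ makes both $\step_\run\mix_\run$ and $\step_\run^{2}/\mix_\run^{2}$ of order $\run^{-4/3}$, so the bias and variance contributions coalesce into a single residual at that rate; this is exactly the bias-variance balance foreshadowed in the discussion after \cref{thm:convergence}. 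Taking unconditional expectations then delivers the advertised recursion.

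The endgame is the classical lemma: if $a_{\run+1} \leq (1 - \alpha/\run)\,a_\run + \beta/\run^{4/3}$ for large $\run$ with $\alpha > 1/3$, then $a_\run = \bigoh(\run^{-1/3})$. With $\alpha = \step\str$, the hypothesis $\step > 1/(3\str)$ is exactly the required condition $\alpha > 1/3$, so $E_\run = \bigoh(\run^{-1/3})$. To transfer the bound to the realized actions, note that $\est\act_\run - \eq = (\act_\run - \eq) + \mix_\run\pertvar_\run$ with $\norm{\pertvar_\run} = \bigoh(1)$ (since $\unitvar_\run\in\sphere$ and $\feas$ is bounded); hence $\exof{\norm{\est\act_\run - \eq}^{2}} \leq 2 E_\run + 2\mix_\run^{2}\exof{\norm{\pertvar_\run}^{2}} = \bigoh(\run^{-1/3}) + \bigoh(\run^{-2/3}) = \bigoh(\run^{-1/3})$.

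The main obstacle is bookkeeping rather than any deep idea. One must carefully verify $\norm{\bias_\run} = \bigoh(\mix_\run)$ by splitting $\bias_\run$ into the smoothing error $\del_\play\pay_\play^{\mix_\run} - \del_\play\pay_\play$, which is $\bigoh(\mix_\run)$ under Lipschitz $\payv_\play$, and the feasibility-adjustment shift from $\act_\run$ to $\act_\run^{\mix_\run}$, which contributes another $\bigoh(\mix_\run)$. A secondary nuisance is that \eqref{eq:DSC-str} is weighted by general $\lambda_\play$; if these weights are not uniform I would work with the weighted norm $\sum_\play \lambda_\play\norm{\cdot_\play}^{2}$ (equivalently, re-scale each player's step size by $\lambda_\play$), under which Euclidean projection still decomposes across players and the recursion goes through unchanged up to a relabeling of constants.
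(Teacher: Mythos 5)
Your proposal is correct and follows essentially the same route as the paper: a one-step recursion for the expected squared distance obtained from the (Euclidean) prox inequality, strong monotonicity at the equilibrium, the $\bigoh(\mix_{\run})$ bias and $\bigoh(1/\mix_{\run}^{2})$ second-moment bounds for the \ac{SPSA} estimator, and then Chung's lemma with $\pexp=1$, $\qexp=1/3$ under the condition $\str\step>1/3$. Your use of non-expansiveness of the projection is just the Euclidean specialization of the paper's Bregman estimate \eqref{eq:Bregman-old2new-alt}, and your explicit transfer from the pivots $\act_{\run}$ to the realized actions $\est\act_{\run}$ via $\norm{\est\act_{\run}-\act_{\run}}=\bigoh(\mix_{\run})$ is a welcome detail the paper leaves implicit.
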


\cref{thm:rate} is our main finite-time analysis result, so some remarks are in order.
First, the step-size schedule $\step_{\run} \propto 1/\run$ is not required to obtain an $\bigoh(\run^{-1/3})$ convergence rate:
as we show in the paper's supplement, more general schedules of the form $\step_{\run} \propto 1/\run^{\pexp}$ and $\mix_{\run} \propto 1/\run^{\qexp}$ with $\pexp>3/4$ and $\qexp=\pexp/3 > 1/4$, still guarantee an $\bigoh(\run^{-1/3})$ rate of convergence for \eqref{eq:MD-0}.
To put things in perspective, we also show in the supplement that if \eqref{eq:MD} is run with first-order oracle feedback satisfying the statistical assumptions \eqref{eq:oracle}, the rate of convergence becomes $\bigoh(1/\run)$.
Viewed in this light, the price for not having access to gradient information is no higher than $\bigoh(\run^{-2/3})$ in terms of the players' equilibration rate.


Finally, it is also worth comparing the bound \eqref{eq:rate} to the attainable rates for stochastic convex optimization (the single-player case).
For problems with objectives that are both strongly convex and smooth,
\cite{ADX10} attained an $\bigoh(\run^{-1/2})$ convergence rate with bandit feedback,
which \cite{Sha13} showed is unimprovable.
Thus, in the single-player case, the bound \eqref{eq:rate} is off by $\run^{1/6}$ and coincides with the bound of \cite{ADX10} for strongly convex functions that are not necessarily smooth.
One reason for this gap is that the $\Theta(\run^{-1/2})$ bound of \cite{Sha13} concerns the smoothed-out time average $\bar\act_{\run} = \run^{-1} \sum_{\runalt=1}^{\run} \act_{\runalt}$, while our analysis concerns the sequence of \emph{realized actions} $\est\act_{\run}$.
This difference is semantically significant:
In optimization, the query sequence is just a means to an end, and only the algorithm's output matters
(\ie $\bar\act_{\run}$).
In a game-theoretic setting however, it is the players' \emph{realized} actions that determine their rewards at each stage, so the figure of merit is the actual sequence of play $\est\act_{\run}$.
This sequence
is more difficult to control, so this disparity is, perhaps, not too surprising;
nevertheless, we believe that this gap can be closed by using a more sophisticated single-shot estimate, \eg as in \cite{GL13}.
We defer this analysis to the future.

\section{Concluding remarks}
\label{sec:conclusions}

The most sensible choice for agents who are oblivious to the presence of each other (or who are simply conservative), is to deploy a no-regret learning algorithm.
With this in mind, we studied the long-run behavior of individual regularized no-regret learning policies and we showed that, in monotone games,
play converges to equilibrium with probability $1$, and the rate of convergence almost matches the optimal rates of \emph{single-agent}, stochastic convex optimization.
Nevertheless, several questions remain open: 
whether there is an intrinsic information-theoretic obstacle to bridging this gap;
whether our convergence rate estimates hold with high probability (and not just in expectation);
and
whether our analysis 
extends to a fully decentralized setting where the players' updates need not be synchronous.
We intend to address these questions in future work.

\appendix

\section{Monotone games}
\label{app:monotone}

Our aim in this appendix is to show that the game-theoretic examples of \cref{sec:prelims} are both monotone.
Before studying them in detail, it will be convenient to introduce a straightforward second-order test for monotonicity based on the game's Hessian matrix.

Specifically, extending the notion of the Hessian of an ordinary (scalar) function, the \emph{\textpar{$\lambda$-weighted} Hessian} of a game $\game$ is defined as the block matrix $H_{\game}(\state;\lambda) = (H_{\play\playalt}(\state;\lambda))_{\play,\playalt\in\players}$ with blocks
\begin{equation}
\label{eq:Hessian}
H_{\play\playalt}(\state;\lambda)
	= \frac{\lambda_{\play}}{2} \del_{\playalt} \del_{\play} \pay_{\play}(\state)
	+ \frac{\lambda_{\playalt}}{2} \parens*{\del_{\play} \del_{\playalt} \pay_{\playalt}(\state)}^{\top}.
\end{equation}
As was shown by \citet[Theorem~6]{Ros65}, $\game$ satisifes \eqref{eq:DSC} with weight vector $\lambda$ whenever $z^{\top} H_{\game}(\state;\lambda) z < 0$ for all $\state\in\feas$ and all nonzero $z\in\vecspace \equiv \prod_{\play} \vecspace_{\play}$ that are tangent to $\feas$ at $\state$.%
\footnote{By ``tangent'' we mean here that $z$ belongs to the tangent cone $\tcone(\state)$ to $\feas$ at $\state$, \ie the intersection of all supporting (closed) half-spaces of $\feas$ at $\state$.}
It is thus common to check for monotonicity by taking $\lambda_{\play}=1$ for all $\play\in\players$ and verifying whether the unweighted Hessian of $\game$ is negative-definite on the affine hull of $\feas$.

\subsection{Cournot competition (Example \ref{ex:game-Cournot})\afterhead}
In the standard Cournot oligopoly model described in the main body of the paper, the players' payoff functions are given by
\begin{equation}
\txs
\pay_{\play}(\state)
	= \state_{\play} \parens[\big]{a - b\sum_{\playalt} \state_{\playalt}}
	- c_{\play} \state_{\play}.
\end{equation}
Consequently, a simple differentiation yields
\begin{equation}
H_{\play\playalt}(\state)
	= \frac{1}{2} \frac{\pd^{2} \pay_{\play}}{\pd\state_{\play}\pd\state_{\playalt}}
	+ \frac{1}{2} \frac{\pd^{2} \pay_{\playalt}}{\pd\state_{\playalt}\pd\state_{\play}}
	= -b(1 + \delta_{\play\playalt}),
\end{equation}
where $\delta_{\play\playalt} = \one\{\play=\playalt\}$ is the Kronecker delta.
This matrix is clearly negative-definite, so the game is monotone.

\subsection{Resource allocation auctions (Example \ref{ex:game-auction})\afterhead}
In our auction-theoretic example, the players' payoff functions are given by
\begin{equation}
\label{eq:pay-auction}
\pay_{\play}(\state_{\play};\state_{-\play})
	= \sum_{\resource\in\resources} \bracks*{
	\frac{g_{\play} q_{\resource} \state_{\play\resource}}{c_{\resource} + \sum_{\playalt\in\players} \state_{\playalt\resource}}
	- \state_{\play\resource}}
\end{equation}

To prove monotonicity in this example, we will consider the following criterion due to \cite{Goo80}:
a game $\game$ satisfies \eqref{eq:DSC} with weights $\lambda_{\play}$, $\play\in\players$, if:
\begin{enumerate}
[\hspace{2em}\itshape a\upshape)]
\item
Each payoff function $\pay_{\play}$ is strictly concave in $\state_{\play}$ and convex in $\state_{-\play}$.
\item
The function $\sum_{\play\in\players} \lambda_{\play} \pay_{\play}(\state)$ is concave in $\state$.
\end{enumerate}

Since the function $\phi(x) = x/(c+x)$ is strictly concave in $x$ for all $c>0$, the first condition above is trivial to verify.
For the second, letting $\lambda_{\play} = 1/g_{\play}$ gives
\begin{flalign}
\sum_{\play\in\players} \lambda_{\play} \pay_{\play}(\state)
	&= \sum_{\play\in\players} \sum_{\resource\in\resources}
	\frac{q_{\resource} \state_{\play\resource}}{c_{\resource} + \sum_{\playalt\in\players} \state_{\playalt\resource}}
	- \sum_{\play\in\players} \sum_{\resource\in\resources} \state_{\play\resource}
	\notag\\
	&= \sum_{\resource\in\resources} q_{\resource} \frac{\sum_{\play\in\players} \state_{\play\resource}}{c_{\resource} + \sum_{\play\in\players} \state_{\play\resource}}
	- \sum_{\play\in\players} \sum_{\resource\in\resources} \state_{\play\resource}.
\end{flalign}
Since the summands above are all concave in their respective arguments, our claim follows.

\section{Properties of Bregman proximal mappings}
\label{app:Bregman}

In this appendix, we provide some auxiliary results and estimates that are used throughout the convergence analysis of \cref{app:convergence}.
Some of the results we present here are not new \cite[see \eg][]{NJLS09};
however, the set of hypotheses used to obtain them varies widely in the literature, so we provide all proofs for completeness.

In what follows, we will make frequent use of the convex conjugate $\hreg^{\ast}\from\dspace\to\R$ of $\hreg$, defined here as
\begin{equation}
\label{eq:conj}
\hreg^{\ast}(\dstate)
	= \max_{\state\in\feas} \{ \braket{\dstate}{\state} - \hreg(\state) \}.
\end{equation}
By standard results in convex analysis \cite[Chap.~26]{Roc70}, $\hreg^{\ast}$ is differentiable on $\dspace$ and its gradient satisfies the identity
\begin{equation}
\label{eq:dconj}
\nabla\hreg^{\ast}(\dstate)
	= \argmax_{\state\in\feas} \{ \braket{\dstate}{\state} - \hreg(\state) \}.
\end{equation}
For notational convenience, we will also write
\begin{equation}
\label{eq:mirror}
\mirror(\dstate)
	= \nabla\hreg^{\ast}(\dstate)
\end{equation}
and we will refer to $\mirror\from\dspace\to\feas$ as the \emph{mirror map} generated by $\hreg$.

Together with the prox-mapping induced by $\hreg$, all these notions are related as follows:
\begin{lemma}
\label{lem:mirror}
Let $\hreg$ be a regularizer on $\feas$.
Then, for all $\state\in\dom\subd\hreg$, $\dstate\in\dspace$, we have:
\begin{subequations}
\label{eq:links}
\begin{alignat}{4}
\label{eq:links-mirror}
&a)&
	&\;\;
	\state = \mirror(\dstate)
	&\;\iff\;
	&\dstate \in \subd\hreg(\state).
	\\
\label{eq:links-prox}
&b)&
	&\;\;
	\new\state = \prox_{\state}(\dstate)
	&\;\iff\;
	&\nabla\hreg(\state) + \dstate \in \subd\hreg(\new\state)
	&\;\iff\;
	&\new\state = \mirror(\nabla\hreg(\state) + \dstate).
	\hspace{4em}
\end{alignat}
\end{subequations}
Finally, if $\state = \mirror(\dstate)$ and $\base\in\feas$, we have
\begin{equation}
\label{eq:selection}
\braket{\nabla\hreg(\state)}{\state - \base}
	\leq \braket{\dstate}{\state - \base}.
\end{equation}
\end{lemma}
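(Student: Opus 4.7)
My plan is to dispose of the three claims in sequence, using only the variational definitions of $\mirror$ and $\prox$ together with the standard subgradient inequality for convex functions.

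For (a), I will start from the identity $\mirror(\dstate) = \argmax_{\alt\state \in \feas}\{\braket{\dstate}{\alt\state} - \hreg(\alt\state)\}$ and apply Fermat's rule. Under the paper's footnote convention $\hreg(\alt\state) = +\infty$ for $\alt\state \in \vecspace \setminus \feas$, the function $\braket{\dstate}{\cdot} - \hreg(\cdot)$ is a proper concave function on $\vecspace$ whose maximizers $\state$ are characterized by $0 \in -\dstate + \subd\hreg(\state)$, i.e. $\dstate \in \subd\hreg(\state)$. This is precisely \eqref{eq:links-mirror}.

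For (b), I will expand the Bregman divergence inside the defining optimization for $\prox_\state$, viz.
$$\prox_{\state}(\dstate) = \argmin_{\alt\state\in\feas}\big[\braket{\dstate}{\state - \alt\state} + \hreg(\alt\state) - \hreg(\state) - \braket{\nabla\hreg(\state)}{\alt\state - \state}\big].$$
Discarding summands that do not depend on $\alt\state$ and converting the $\argmin$ into an $\argmax$ by flipping signs, the remaining objective is $\braket{\nabla\hreg(\state) + \dstate}{\alt\state} - \hreg(\alt\state)$, so $\prox_{\state}(\dstate) = \mirror(\nabla\hreg(\state) + \dstate)$. The subdifferential characterization in \eqref{eq:links-prox} then follows by applying (a) to the dual vector $\nabla\hreg(\state) + \dstate$.

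For (c), I will chain two inequalities. Optimality of $\state = \mirror(\dstate)$ against the test point $\base \in \feas$ in the concave maximization defining $\mirror$ yields $\braket{\dstate}{\state - \base} \geq \hreg(\state) - \hreg(\base)$, while the subgradient inequality for $\nabla\hreg(\state) \in \subd\hreg(\state)$ gives $\hreg(\base) - \hreg(\state) \geq \braket{\nabla\hreg(\state)}{\base - \state}$, equivalently $\hreg(\state) - \hreg(\base) \geq \braket{\nabla\hreg(\state)}{\state - \base}$. Concatenating the two bounds produces exactly \eqref{eq:selection}. The only mildly delicate point in the whole argument is (a), where one must treat $\hreg$ as an extended-valued convex function on $\vecspace$ so that the subdifferential at a boundary point absorbs the relevant normal-cone contribution of $\feas$; with the footnote's convention this is automatic, and everything else is convex-analytic bookkeeping.
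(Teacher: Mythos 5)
Parts (a) and (b) of your argument are correct and are essentially the paper's own proof (Fermat's rule for the maximization defining $\mirror$, then expansion of the Bregman divergence in \eqref{eq:prox} and reduction to (a)). The problem is part (c), where the second link of your chain comes from an illegal sign flip. From the subgradient inequality $\hreg(\base) - \hreg(\state) \geq \braket{\nabla\hreg(\state)}{\base - \state}$, negating both sides \emph{reverses} the inequality and gives $\hreg(\state) - \hreg(\base) \leq \braket{\nabla\hreg(\state)}{\state - \base}$, the opposite of what you wrote. Your two valid inequalities are therefore $\braket{\dstate}{\state - \base} \geq \hreg(\state) - \hreg(\base)$ and $\braket{\nabla\hreg(\state)}{\state - \base} \geq \hreg(\state) - \hreg(\base)$: both pairings dominate the \emph{same} quantity, so no comparison between them follows. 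Concretely, take $\feas = [0,1]$, $\hreg(\state) = \state^{2}/2$, $\state = 1$, $\base = 0$: then $\braket{\nabla\hreg(\state)}{\state - \base} = 1 > 1/2 = \hreg(\state) - \hreg(\base)$, so the link $\braket{\nabla\hreg(\state)}{\state - \base} \leq \hreg(\state) - \hreg(\base)$ that your chain needs is simply false, even though \eqref{eq:selection} itself holds here (since $\subd\hreg(1) = [1,\infty)$ forces $\braket{\dstate}{\state - \base} = \dstate \geq 1$).

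The missing idea is that \eqref{eq:selection} is \emph{not} a relation between two arbitrary elements of $\subd\hreg(\state)$; for those it can fail (in the example above, a ``selection'' taking the value $2 \in \subd\hreg(1)$ at $\state = 1$ paired with $\dstate = 1$ would violate it). What makes the inequality true is that $\nabla\hreg$ is a \emph{continuous} selection on $\dom\subd\hreg$, so $\nabla\hreg(\state)$ is the limiting subgradient along the segment from $\state$ towards $\base$. This is exactly what the paper's proof exploits: after reducing to $\base \in \intfeas$ by continuity, it considers $\phi(t) = \hreg(\state + t(\base - \state)) - \hreg(\state) - t \braket{\dstate}{\base - \state}$, which is convex with $\phi(t) \geq 0 = \phi(0)$ by the subgradient inequality for $\dstate \in \subd\hreg(\state)$ (valid by part (a)); it then observes that $\psi(t) = \braket{\nabla\hreg(\state + t(\base - \state)) - \dstate}{\base - \state}$ is a continuous selection of $\subd\phi$, so that $\phi$ is continuously differentiable with $\phi' = \psi$ on $[0,1]$, and concludes $\phi'(0) = \braket{\nabla\hreg(\state) - \dstate}{\base - \state} \geq 0$, which is \eqref{eq:selection}. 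Your part (c) needs to be replaced by an argument of this type; the remark that everything past (a) is ``convex-analytic bookkeeping'' is precisely where the proof goes wrong.
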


\begin{remark*}
Note that \eqref{eq:links-prox} directly implies that $\subd\hreg(\new\state) \neq \varnothing$, \ie $\new\state \in \dom\subd\hreg$.
An immediate consequence of this is that the update rule $\state \leftarrow \prox_{\state}(\dstate)$ is \emph{well-posed},
\ie it can be iterated in perpetuity.
\end{remark*}

\begin{proof}[Proof of \cref{lem:mirror}]
To prove \eqref{eq:links-mirror}, note that $\state$ solves \eqref{eq:dconj} if and only if $\dstate - \subd\hreg(\state) \ni 0$, \ie if and only if $\dstate\in\subd\hreg(\state)$.
Similarly, for \eqref{eq:links-prox}, comparing \eqref{eq:prox} and \eqref{eq:conj}, we see that $\new\state$ solves \eqref{eq:prox} if and only if $\nabla\hreg(\state) + \dstate \in \subd\hreg(\new\state)$, \ie if and only if $\new\state = \mirror(\nabla\hreg(\state) + \dstate)$.

For the inequality \eqref{eq:selection}, it suffices to show it holds for interior $\base\in\intfeas$ (by continuity).
To do so, let
\begin{equation}
\phi(t)
	= \hreg(\state + t(\base-\state))
	- \bracks{\hreg(\state) +  \braket{\dstate}{\state + t(\base-\state)}}.
\end{equation}
Since $\hreg$ is strongly convex and $\dstate\in\subd\hreg(\state)$ by \eqref{eq:links-mirror}, it follows that $\phi(t)\geq0$ with equality if and only if $t=0$.
Moreover, note that $\psi(t) = \braket{\nabla\hreg(\state + t(\base-\state)) - \dstate}{\base - \state}$ is a continuous selection of subgradients of $\phi$.
Given that $\phi$ and $\psi$ are both continuous on $[0,1]$, it follows that $\phi$ is continuously differentiable and $\phi' = \psi$ on $[0,1]$.
Thus, with $\phi$ convex and $\phi(t) \geq 0 = \phi(0)$ for all $t\in[0,1]$, we conclude that $\phi'(0) = \braket{\nabla\hreg(\state) - \dstate}{\base - \state} \geq 0$, from which our claim follows.
\end{proof}

We continue with some basic relations connecting the Bregman divergence relative to a target point before and after a prox step.
The basic ingredient for this is a generalization of the law of cosines which is known in the literature as the ``three-point identity'' \citep{CT93}:

\begin{lemma}
\label{lem:3points}
Let $\hreg$ be a regularizer on $\feas$.
Then, for all $\base\in\feas$ and all $\state,\alt\state\in\dom\subd\hreg$, we have
\begin{equation}
\label{eq:3points}
\breg(\base,\alt\state)
	= \breg(\base,\state)
	+ \breg(\state,\alt\state)
	+ \braket{\nabla\hreg(\alt\state) - \nabla\hreg(\state)}{\state - \base}.
\end{equation}
\end{lemma}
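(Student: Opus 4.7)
The plan is essentially to verify the identity by direct algebraic expansion, since the Bregman divergence has a very explicit definition. Specifically, I would write out all three divergences appearing in \eqref{eq:3points} using
\[
\breg(\base,\state) = \hreg(\base) - \hreg(\state) - \braket{\nabla\hreg(\state)}{\base - \state},
\]
and then rearrange. The well-posedness on the domains in question is already guaranteed: $\state,\alt\state \in \dom\subd\hreg$ ensures that $\nabla\hreg(\state)$ and $\nabla\hreg(\alt\state)$ exist as continuous selections of the subdifferential (as postulated in the setup of \cref{sec:learning}), while $\base \in \feas$ ensures $\hreg(\base)<+\infty$ by the standing convention.

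Concretely, I would first record the three expansions
\[
\breg(\base,\alt\state) = \hreg(\base) - \hreg(\alt\state) - \braket{\nabla\hreg(\alt\state)}{\base - \alt\state},
\]
\[
\breg(\base,\state) = \hreg(\base) - \hreg(\state) - \braket{\nabla\hreg(\state)}{\base - \state},
\]
\[
\breg(\state,\alt\state) = \hreg(\state) - \hreg(\alt\state) - \braket{\nabla\hreg(\alt\state)}{\state - \alt\state}.
\]
Subtracting the sum of the second and third from the first, the $\hreg(\base)$, $\hreg(\state)$, and $\hreg(\alt\state)$ terms telescope, leaving only the three inner-product terms. Collecting the two terms involving $\nabla\hreg(\alt\state)$ via the identity $(\state - \alt\state) - (\base - \alt\state) = \state - \base$, and flipping the sign on $\braket{\nabla\hreg(\state)}{\base - \state}$, yields exactly
\[
\braket{\nabla\hreg(\alt\state) - \nabla\hreg(\state)}{\state - \base},
\]
which is the claimed right-hand side of \eqref{eq:3points}.

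There is no real obstacle here — the identity is purely formal and requires no convexity or strong-convexity input beyond the existence of the selections $\nabla\hreg(\state)$ and $\nabla\hreg(\alt\state)$. The only small point of care is the bookkeeping of signs in the duality pairings, which is why I would carry out the cancellation explicitly rather than invoke symmetry. Once proved, this identity is the workhorse for the subsequent energy-function estimates needed in the convergence analysis of \cref{alg:MD-0}, where $\state$ plays the role of the current iterate, $\alt\state$ of the next iterate, and $\base$ of a target point (typically a Nash equilibrium).
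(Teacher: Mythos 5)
Your proof is correct and is essentially identical to the paper's: both expand the three Bregman divergences from the definition \eqref{eq:Bregman} and observe that the $\hreg$ terms cancel while the duality pairings combine into $\braket{\nabla\hreg(\alt\state) - \nabla\hreg(\state)}{\state - \base}$. The only difference is that you spell out the cancellation of the inner-product terms explicitly, whereas the paper leaves it to the reader.
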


\begin{proof}
By definition, we get:
\begin{equation}
\begin{aligned}
\breg(\base,\alt\state)
	&= \hreg(\base) - \hreg(\alt\state) - \braket{\nabla\hreg(\alt\state)}{\base - \alt\state}
	\\
\breg(\base,\state)\hphantom{'}
	&= \hreg(\base) - \hreg(\state) - \braket{\nabla\hreg(\state)}{\base - \state}
	\\
\breg(\state,\alt\state)
	&= \hreg(\state) - \hreg(\alt\state) - \braket{\nabla\hreg(\alt\state)}{\state - \alt\state}.
\end{aligned}
\end{equation}
The lemma then follows by adding the two last lines and subtracting the first.
\end{proof}

With all this at hand, we have the following upper and lower bounds:

\begin{proposition}
\label{prop:Bregman}
Let $\hreg$ be a $\strong$-strongly convex regularizer on $\feas$,
fix some $\base\in\feas$,
and let $\new\state = \prox_{\state}(\dstate)$ for $\state\in\dom\subd\hreg$, $\dstate\in\dspace$.
Then, we have:
\begin{subequations}
\begin{flalign}
\label{eq:Bregman-lower}
\breg(\base,\state)\hphantom{^{+}}
	&\geq \frac{\strong}{2} \norm{\state - \base}^{2}.
	\\
\label{eq:Bregman-old2new}
\breg(\base,\new\state)
	&\leq \breg(\base,\state)
	- \breg(\new\state,\state)
	+ \braket{\dstate}{\new\state - \base}
	\\
\label{eq:Bregman-old2new-alt}
	&\leq \breg(\base,\state)
	+ \braket{\dstate}{\state - \base}
	+ \frac{1}{2\strong} \dnorm{\dstate}^{2}
\end{flalign}
\end{subequations}
\end{proposition}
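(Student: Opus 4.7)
The three estimates are tightly coupled: the lower bound \eqref{eq:Bregman-lower} is the workhorse that converts primal/dual pairings into squared norms, \eqref{eq:Bregman-old2new} follows from the three-point identity together with the optimality characterization of the prox-map, and \eqref{eq:Bregman-old2new-alt} is obtained from \eqref{eq:Bregman-old2new} by a single Young-type application of \eqref{eq:Bregman-lower}. I would carry out the three parts in this order.

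For \eqref{eq:Bregman-lower}, the starting point is that the strong convexity condition \eqref{eq:strong} is the classical chord form for all $t\in[0,1]$, and hence, by a standard calculus argument, upgrades to the differential form
\begin{equation*}
\hreg(\base) \geq \hreg(\state) + \braket{\nabla\hreg(\state)}{\base - \state} + \tfrac{\strong}{2}\norm{\base - \state}^{2}
\end{equation*}
for every $\base\in\feas$ and $\state\in\dom\subd\hreg$. Rearranging via the definition \eqref{eq:Bregman} gives the claim directly; the only subtlety is verifying that the continuous selection $\nabla\hreg$ used in \eqref{eq:Bregman} is indeed a valid subgradient, which is precisely what the standing assumption on $\hreg$ guarantees.

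For \eqref{eq:Bregman-old2new}, I would apply \cref{lem:3points} after swapping the roles of $\state$ and $\alt\state$ so as to bring $\breg(\base,\new\state)$ to the left-hand side and $\breg(\new\state,\state)$ to the right, which yields
\begin{equation*}
\breg(\base,\new\state) = \breg(\base,\state) - \breg(\new\state,\state) + \braket{\nabla\hreg(\new\state) - \nabla\hreg(\state)}{\new\state - \base}.
\end{equation*}
By \eqref{eq:links-prox}, $\new\state = \mirror(\nabla\hreg(\state) + \dstate)$, so applying \eqref{eq:selection} with dual vector $\nabla\hreg(\state) + \dstate$ and target point $\base$ gives
\begin{equation*}
\braket{\nabla\hreg(\new\state)}{\new\state - \base} \leq \braket{\nabla\hreg(\state) + \dstate}{\new\state - \base},
\end{equation*}
and substituting back produces exactly \eqref{eq:Bregman-old2new}. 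The step requiring the most care here is keeping the indices and orientation straight in the three-point identity, since the asymmetry of $\breg$ makes it easy to pick up a sign error; this is the one place where I expect routine slips.

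Finally, for \eqref{eq:Bregman-old2new-alt}, I would decompose $\braket{\dstate}{\new\state - \base} = \braket{\dstate}{\state - \base} + \braket{\dstate}{\new\state - \state}$ in \eqref{eq:Bregman-old2new} and bound the second pairing by Young's inequality in the primal/dual norm pair, namely
\begin{equation*}
\braket{\dstate}{\new\state - \state} \leq \tfrac{1}{2\strong}\dnorm{\dstate}^{2} + \tfrac{\strong}{2}\norm{\new\state - \state}^{2}.
\end{equation*}
Using \eqref{eq:Bregman-lower} to absorb $-\breg(\new\state,\state) \leq -\tfrac{\strong}{2}\norm{\new\state - \state}^{2}$, the two squared-norm terms cancel, and what remains is precisely \eqref{eq:Bregman-old2new-alt}.
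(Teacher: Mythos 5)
Your proposal is correct and follows essentially the same route as the paper's proof: the differential form of strong convexity for \eqref{eq:Bregman-lower}, the three-point identity of \cref{lem:3points} combined with \eqref{eq:selection} applied to $\new\state = \mirror(\nabla\hreg(\state)+\dstate)$ for \eqref{eq:Bregman-old2new}, and the decomposition plus Young's inequality with the cancellation against $-\breg(\new\state,\state)$ via \eqref{eq:Bregman-lower} for \eqref{eq:Bregman-old2new-alt}. No gaps.
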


\begin{proof}[Proof of \eqref{eq:Bregman-lower}]
By the strong convexity of $\hreg$, we get
\begin{equation}
\hreg(\base)
	\geq \hreg(\state)
	+ \braket{\nabla\hreg(\state)}{\base - \state}
	+ \frac{\strong}{2} \norm{\base - \state}^{2}
\end{equation}
so \eqref{eq:Bregman-lower} follows by gathering all terms involving $\hreg$ and recalling the definition of $\breg(\base,\state)$.
\end{proof}

\begin{proof}[Proof of \labelcref{eq:Bregman-old2new,eq:Bregman-old2new-alt}]
By the three-point identity \eqref{eq:3points}, we readily obtain
\begin{equation}
\breg(\base,\state)
	= \breg(\base,\new\state)
	+ \breg(\new\state,\state)
	+ \braket{\nabla\hreg(\state) - \nabla\hreg(\new\state)}{\new\state - \base},
\end{equation}
and hence:

\begin{flalign}
\label{eq:upper-new2old}
\breg(\base,\new\state)
	&= \breg(\base,\state)
	- \breg(\new\state,\state)
	+ \braket{\nabla\hreg(\new\state) - \nabla\hreg(\state)}{\new\state - \base}
	\notag\\
	&\leq \breg(\base,\state)
	- \breg(\new\state,\state)
	+ \braket{\dstate}{\new\state - \base},
\end{flalign} where, in the last step, we used \eqref{eq:selection} and the fact that $\new\state = \mirror(\nabla\hreg(\state) + \dstate)$, by \eqref{eq:links-prox}, since $\new\state = \prox_{\state}(\dstate)$.
The above is just \eqref{eq:Bregman-old2new}, so the first part of our proof is complete.

To proceed with the proof of \eqref{eq:Bregman-old2new-alt}, note that \eqref{eq:upper-new2old} gives
\begin{flalign}
\breg(\base,\new\state)
	&\leq \breg(\base,\state)
	+ \braket{\dstate}{\state - \base}
	+ \braket{\dstate}{\new\state - \state}
	- \breg(\new\state,\state).
\end{flalign}
By Young's inequality \citep{Roc70}, we also have
\begin{equation}
\braket{\dstate}{\new\state - \state}
	\leq \frac{\strong}{2} \norm{\new\state - \state}^{2}
	+ \frac{1}{2\strong} \dnorm{\dstate}^{2},
\end{equation}
and hence
\begin{flalign}
\breg(\base,\new\state)
	&\leq \breg(\base,\state)
	+ \braket{\dstate}{\state - \base}
	+ \frac{1}{2\strong} \dnorm{\dstate}^{2}
	+ \frac{\strong}{2} \norm{\new\state - \state}^{2}
	- \breg(\new\state,\state)
	\notag\\
	&\leq \breg(\base,\state)
	+ \braket{\dstate}{\state - \base}
	+ \frac{1}{2\strong} \dnorm{\dstate}^{2},
\end{flalign}
with the last step following from \cref{lem:mirror} after plugging in $\state$ in place of $\base$.
\end{proof}

\section{Asymptotic convergence analysis}
\label{app:convergence}

Our goal in this appendix is to prove \cref{thm:convergence}.
Our proof strategy will be based on a two-pronged approach.
First, we will show that the pivot sequence $\act_{\run}$ satisfies a ``quasi-Fejér'' property \citep{Com01,CP15} with respect to the Bregman divergence.
This quasi-Fejér property allows us to show that the Bregman divergence $\breg(\eq,\act_{\run})$ with respect to a \acl{NE} $\eq$ of $\game$ converges.
To show that this limit is actually zero for \emph{some} \acl{NE}, we prove that, with probability $1$, the sequence $\act_{\run}$ admits a (random) subsequence that converges to a \acl{NE}.
The theorem then follows by combining these two results.

To carry all this out, we begin with an auxiliary lemma for the \ac{SPSA} estimation process of \cref{sec:feedback}:

\begin{lemma}
\label{lem:SPSA}
The \ac{SPSA} estimator $\est\payv = (\est\payv_{\play})_{\play\in\players}$ given by \eqref{eq:SPSA} satisfies
\begin{equation}
\exof{\est\payv_{\play}}
	= \del_{\play} \pay_{\play}^{\mix},
\end{equation}
with $\pay_{\play}^{\mix}$ as in \eqref{eq:smoothed}.
Moreover, we have $\norm{\del_{\play} \pay_{\play}^{\mix} - \del_{\play}\pay_{\play}}_{\infty} = \bigoh(\mix)$.
\end{lemma}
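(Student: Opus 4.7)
The plan is to prove the two claims separately: unbiasedness via the divergence theorem (Stokes's theorem), and the $\bigoh(\mix)$ smoothing bias from the Lipschitz continuity of $\payv_{\play}$ that is part of our standing hypotheses.

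For unbiasedness, I would fix player $\play$ and condition on the other players' perturbations $\unitvar_{\playalt}$, $\playalt \neq \play$, which enter the integrand only as parameters. The expectation $\ex_{\unitvar_{\play}}[\pay_{\play}(\state_{\play} + \mix\unitvar_{\play}; \cdot)\, \unitvar_{\play}]$ is then an integral over the unit sphere $\sphere_{\play}$; changing variables to $\pert_{\play} = \mix\unitvar_{\play}$ rewrites it (up to a $\mix^{1-\vdim_{\play}}/\vol(\sphere_{\play})$ factor) as a surface integral $\int_{\mix\sphere_{\play}} \pay_{\play}(\state_{\play} + \pert_{\play}; \cdot)\, \pert_{\play}\, d\sigma(\pert_{\play})$. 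Since the outward unit normal on $\mix\sphere_{\play}$ is precisely $\pert_{\play}/\mix$, this surface integral is (componentwise, for each coordinate $k$) the flux across $\mix\sphere_{\play}$ of the vector field $\pert_{\play} \mapsto \pay_{\play}(\state_{\play} + \pert_{\play}; \cdot)\, \bvec_{k}$, which by the divergence theorem equals $\int_{\mix\ball_{\play}} \del_{k}\pay_{\play}(\state_{\play} + \pert_{\play}; \cdot)\, d\pert_{\play}$. Collecting the constants via $\vol(\sphere_{\play})/\vol(\ball_{\play}) = \vdim_{\play}$ and multiplying by the $\vdim_{\play}/\mix$ prefactor of \eqref{eq:SPSA}, the net operator is exactly the averaging of $\del_{\play}\pay_{\play}$ over $\pert_{\play}\in\mix\ball_{\play}$ that defines $\del_{\play}\pay_{\play}^{\mix}$ in \eqref{eq:smoothed}. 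Taking expectation over $\unitvar_{-\play}$ then supplies the remaining averages over $\mix\sphere_{\playalt}$ for $\playalt \neq \play$, and the first claim follows.

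For the bias estimate, I would simply bring the difference inside the smoothing integral: since $\payv_{\play} = \del_{\play}\pay_{\play}$ is $\Lip$-Lipschitz and every smoothing perturbation $(\pert_{\play}, \unitvec_{-\play})$ has norm at most $\mix$ on the domain of integration, a pointwise estimate yields $\norm{\del_{\play}\pay_{\play}^{\mix}(\state) - \del_{\play}\pay_{\play}(\state)} \leq \Lip \cdot \bigoh(\mix)$ uniformly in $\state$, which is exactly the claimed $\bigoh(\mix)$.

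The only delicate point is the coordinate-wise bookkeeping in the Stokes step: one applies the divergence theorem separately to each of the $\vdim_{\play}$ vector fields $\pert_{\play} \mapsto \pay_{\play}(\state_{\play} + \pert_{\play}; \cdot)\,\bvec_{k}$, and one must verify that the $\mix$-powers from the change of variables combine with the ratio $\vol(\sphere_{\play})/\vol(\ball_{\play}) = \vdim_{\play}$ to reproduce exactly the $\vdim_{\play}/\mix$ prefactor of \eqref{eq:SPSA}. The independence of the $\unitvar_{\play}$'s across players is what licenses the conditioning step that reduces the computation to a single-player Stokes argument; beyond this bookkeeping, everything is routine.
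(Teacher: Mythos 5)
Your proposal is correct and follows essentially the same route as the paper's proof: the unbiasedness claim is established via the same Stokes/divergence-theorem identity converting the surface integral $\int_{\mix\sphere_{\play}}\pay_{\play}(\state+\unitvec)\,\unitvec/\norm{\unitvec}\,d\unitvec$ into $\int_{\mix\ball_{\play}}\del_{\play}\pay_{\play}(\state+\pert)\,d\pert$ (your coordinate-wise flux argument is just a more explicit rendering of it, with the same volume ratio $\vol(\mix\ball_{\play})=(\mix/\vdim_{\play})\vol(\mix\sphere_{\play})$ accounting for the prefactor), and the $\bigoh(\mix)$ bias bound is obtained identically from the Lipschitz continuity of $\payv_{\play}$ applied pointwise under the smoothing integral.
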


\begin{proof}
By the independence of the sampling directions $\unitvec_{\play}$, $\play\in\players$, we have
\begin{flalign}
\exof{\est\payv_{\play}}
	&= \frac{\vdim_{\play}/\mix}{\prod_{\playalt} \vol(\sphere_{\playalt})}
	\int_{\sphere_{1}} \dotsi \int_{\sphere_{\nPlayers}}
		\pay_{\play}(\state_{1} + \mix\unitvec_{1},\dotsc,\state_{\nPlayers} + \mix\unitvec_{\nPlayers})
		\unitvec_{\play}\,
		\dd \unitvec_{1}\dotsm \dd \unitvec_{\nPlayers}
	\notag\\
	&= \frac{\vdim_{\play}/\mix}{\prod_{\playalt} \vol(\mix\sphere_{\playalt})}
	\int_{\mix\sphere_{1}} \dotsi \int_{\mix\sphere_{\nPlayers}}
		\pay_{\play}(\state_{1} + \unitvec_{1},\dotsc,\state_{\nPlayers} + \unitvec_{\nPlayers})
		\frac{\unitvec_{\play}}{\norm{\unitvec_{\play}}}\,
		\dd \unitvec_{1}\dotsm \dd \unitvec_{\nPlayers}
	\notag\\
	&= \frac{\vdim_{\play}/\mix}{\prod_{\playalt} \vol(\mix\sphere_{\playalt})}
	\int_{\mix\sphere_{\play}} \int_{\prod_{\playalt\neq\play} \mix\sphere_{\playalt}}
		\pay_{\play}(\state_{\play} + \unitvec_{\play};\state_{-\play} + \unitvec_{-\play})
		\frac{\unitvec_{\play}}{\norm{\unitvec_{\play}}}\,
		\dd \unitvec_{\play} \dd \unitvec_{-\play}
	\notag\\
	&= \frac{\vdim_{\play}/\mix}{\prod_{\playalt} \vol(\mix\sphere_{\playalt})}
	\int_{\mix\ball_{\play}} \int_{\prod_{\playalt\neq\play} \mix\sphere_{\playalt}}
		\del_{\play}\pay_{\play}(\state_{\play} + \pert_{\play};\state_{-\play} + \unitvec_{-\play})
		\dd \pert_{\play} \dd \unitvec_{-\play},
\end{flalign}
where, in the last line, we used the identity
\begin{equation}
\label{eq:Stokes}
\nabla\int_{\mix\ball} \obj(\state+\pert) \dd\pert
	= \int_{\mix\sphere} \obj(\state+\unitvec) \frac{\unitvec}{\norm{\unitvec}} \dd\unitvec
\end{equation}
which, in turn, follows from Stokes' theorem \citep{Lee03,FKM05}.
Since $\vol(\mix\ball_{\play}) = (\mix/\vdim_{\play}) \vol(\mix\sphere_{\play})$, the above yields $\exof{\est\payv_{\play}}= \del_{\play} \pay_{\play}^{\mix}$ with $\pay_{\play}^{\mix}$ given by \eqref{eq:smoothed}.

For the second part of the lemma, let $\Lip_{\play}$ denote the Lipschitz constant of $\payv_{\play}$, \ie $\dnorm{\payv_{\play}(\alt\state) - \payv_{\play}(\state)} \leq \Lip_{\play} \norm{\alt\state - \state}$ for all $\state,\alt\state\in\feas$.
Then, for all $\pert_{\play}\in\mix\ball_{\play}$ and all $\unitvec_{\playalt} \in \mix\sphere_{\playalt}$, $\playalt\neq\play$, we have
\begin{equation}
\norm{
	\del_{\play} \pay_{\play}(\state_{\play} + \pert_{\play}; \state_{-\play} + \unitvec_{-\play})
	- \del_{\play} \pay_{\play}(\state)}
	\leq \Lip_{\play} \sqrt{\norm{\pert_{\play}}^{2} + \insum_{\playalt\neq\play} \norm{\unitvec_{\playalt}}^{2}}
	\leq \Lip_{\play} \sqrt{\nPlayers} \mix.
\end{equation}
Our assertion then follows by integrating and differentiating under the integral sign. 
\end{proof}

With this basic estimate at hand, we proceed to establish the convergence of the Bregman divergence relative to the game's \aclp{NE}:

\begin{proposition}
\label{prop:quasiFejer}
Let $\eq$ be a \acl{NE} of $\game$.
Then, with assumptions as in \cref{thm:convergence}, the Bregman divergence $\breg(\eq,\act_{\run})$ converges \as to a finite random variable $\breg_{\infty}$.
\end{proposition}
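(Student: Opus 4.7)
The strategy is to identify a Lyapunov function and exhibit it as an almost-supermartingale up to summable error terms, then invoke the Robbins--Siegmund almost-sure convergence theorem. The natural candidate is the $\lambda$-weighted Bregman pseudo-distance
\begin{equation}
V_\run = \sum_{\play\in\players} \lambda_\play \breg_\play(\eq_\play, \act_{\play,\run}),
\end{equation}
which is equivalent, up to fixed positive factors, to $\breg(\eq, \act_\run)$; convergence of $V_\run$ will therefore transfer to convergence of $\breg(\eq, \act_\run)$.

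To generate a one-step recursion, I would apply \eqref{eq:Bregman-old2new-alt} per player to $\act_{\play,\run+1} = \prox_{\act_{\play,\run}}(\step_\run \est\payv_{\play,\run})$, weight by $\lambda_\play$, and sum, obtaining
\begin{equation}
V_{\run+1} \leq V_\run + \step_\run \sum_{\play} \lambda_\play \braket{\est\payv_{\play,\run}}{\act_{\play,\run} - \eq_\play} + \frac{\step_\run^2}{2} \sum_{\play} \frac{\lambda_\play}{\strong_\play} \dnorm{\est\payv_{\play,\run}}^2.
\end{equation}
I would then decompose the estimator as $\est\payv_{\play,\run} = \payv_\play(\act_\run) + \bias_{\play,\run} + \noise_{\play,\run}$ with $\bias_{\play,\run} = \exof{\est\payv_{\play,\run} \given \filter_\run} - \payv_\play(\act_\run)$ and $\noise_{\play,\run} = \est\payv_{\play,\run} - \exof{\est\payv_{\play,\run} \given \filter_\run}$. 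Taking conditional expectation with respect to $\filter_\run$ annihilates the martingale difference $\noise_\run$, and the weighted monotonicity inequality \eqref{eq:VI} makes the remaining deterministic gradient term nonpositive. The bias estimate of \cref{lem:SPSA}, combined with the $O(\mix_\run)$ feasibility shift induced by \eqref{eq:pert} and with boundedness of $\feas$, yields $\abs{\braket{\bias_{\play,\run}}{\act_{\play,\run} - \eq_\play}} = O(\mix_\run)$; boundedness of each $\pay_\play$ on $\feas$ together with \eqref{eq:SPSA} yields $\exof{\dnorm{\est\payv_{\play,\run}}^2 \given \filter_\run} = O(1/\mix_\run^2)$. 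Collecting these estimates produces the quasi-supermartingale inequality
\begin{equation}
\exof{V_{\run+1} \given \filter_\run} \leq V_\run + A \, \step_\run \mix_\run + B \, \frac{\step_\run^2}{\mix_\run^2}
\end{equation}
for positive constants $A, B$.

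By \eqref{eq:params}, both $\sum_\run \step_\run \mix_\run$ and $\sum_\run \step_\run^2/\mix_\run^2$ are finite, so the Robbins--Siegmund theorem applies to the nonnegative adapted process $V_\run$ and yields convergence to a finite random variable $V_\infty$ almost surely. Passing back to $\breg(\eq, \act_\run)$ via the equivalence above produces the desired limit $\breg_\infty$. The main obstacle is the careful accounting of the two distinct $O(\mix_\run)$ sources of bias (payoff smoothing via \eqref{eq:smoothed}, and the shift of the pivot to $\state_\play^{\mix_\run}$ via \eqref{eq:pert}) so that neither disrupts the nonpositivity of the monotonicity-induced drift, together with verifying that $\noise_\run$ is a bona fide martingale difference relative to $\filter_\run$; once the measurability is pinned down and the scaling constants are tracked, the remaining computations are standard quasi-F\'ejer bookkeeping.
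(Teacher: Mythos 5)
Your proposal is correct and follows essentially the same route as the paper: the same one-step bound from \eqref{eq:Bregman-old2new-alt}, the same bias/noise decomposition of the \ac{SPSA} estimator, the same use of \eqref{eq:VI} and of the summability conditions in \eqref{eq:params}. The only cosmetic difference is that you invoke the Robbins--Siegmund theorem where the paper carries out its standard proof by hand (compensating $\breg_{\run}$ with the tail sums $\sum_{\runalt\geq\run}\bracks{\bbound\step_{\runalt}\mix_{\runalt} + (2\strong)^{-1}\vbound^{2}\step_{\runalt}^{2}/\mix_{\runalt}^{2}}$ and applying Doob's supermartingale convergence theorem).
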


\begin{remark*}
For expository reasons, we tacitly assume above (and in what follows) that $\game$ satisfies \eqref{eq:DSC} with weights $\lambda_{\play} = 1$ for all $\play\in\players$.
If this is not the case, the Bregman divergence $\breg(\base,\state)$ should be replaced by the weight-adjusted variant
\begin{equation}
\label{eq:Bregman-lambda}
\breg^{\lambda}(\base,\state)
	= \sum_{\play\in\players} \lambda_{\play} \breg(\base_{\play},\state_{\play}).
\end{equation}
Since this adjustment would force us to carry around all player indices, the presentation would become significantly more cumbersome;
to avoid this, we stick with the simpler, unweighted case.
\end{remark*}

\begin{proof}
Let $\breg_{\run} = \breg(\eq,\act_{\run})$ for some \acl{NE} $\eq$ of $\game$ and write
\begin{equation}
\label{eq:feedback}
\est\payv_{\run}
	= \payv(\act_{\run})
	+ \noise_{\run+1}
	+ \bias_{\run},
\end{equation}
where, recalling the setup of \cref{sec:feedback} in the main body of the paper,
the noise process $\noise_{\run+1} = \est\payv_{\run} - \exof{\est\payv_{\run} \given \filter_{\run}}$ is an $\filter_{\run}$-adapted \acl{MDS}
and
$\bias_{\run} = \payv^{\mix_{\run}}(\act_{\run}^{\mix_{\run}}) - \payv(\act_{\run})$ denotes the systematic bias of the estimator $\est\payv_{\run}$.%
\footnote{Recall here that $\act_{\play}^{\mix}$, $\play\in\players$, denotes the $\mix$-adjusted pivot $\act_{\play}^{\mix} = \act_{\play} + \radius_{\play}^{-1} \mix (\act_{\play} - \base_{\play})$, \ie including the feasibility adjustment $\radius_{\play}^{-1} (\act_{\play} - \base_{\play})$.}
Then, by \cref{prop:Bregman}, we have
\begin{flalign}
\label{eq:Bregman-new-bound}
\breg_{\run+1}
	= \breg(\eq,\prox_{\act_{\run}}(\step_{\run}\est\payv_{\run}))
	&\leq \breg(\eq,\act_{\run})
	+ \step_{\run} \braket{\est\payv_{\run}}{\act_{\run} - \eq}
	+ \frac{\step_{\run}^{2}}{2\strong} \dnorm{\est\payv_{\run}}^{2}
	\notag\\
	&= \breg_{\run}
	+ \step_{\run} \braket{\payv(\act_{\run}) + \noise_{\run+1} + \bias_{\run}}{\act_{\run} - \eq}
	+ \frac{\step_{\run}^{2}}{2\strong} \dnorm{\est\payv_{\run}}^{2}
	\notag\\
	&\leq \breg_{\run}
	+ \step_{\run} \snoise_{\run+1}
	+ \step_{\run} \sbias_{\run}
	+ \frac{\step_{\run}^{2}}{2\strong} \dnorm{\est\payv_{\run}}^{2},
\end{flalign}
where, in the last line, we set
$\snoise_{\run+1} = \braket{\noise_{\run+1}}{\act_{\run} - \eq}$,
$\sbias_{\run} = \braket{\bias_{\run}}{\act_{\run} - \eq}$,
and we used the variational characterization \eqref{eq:VI} of \aclp{NE} of monotone games.
Thus, conditioning on $\filter_{\run}$ and taking expectations, we get
\begin{flalign}
\exof{\breg_{\run+1} \given \filter_{\run}}
	&\leq \breg_{\run}
	+ \exof{\snoise_{\run+1} \given \filter_{\run}}
	+ \step_{\run} \exof{\sbias_{\run} \given \filter_{\run}}
	+ \frac{\step_{\run}^{2}}{2\strong} \exof{\dnorm{\est\payv_{\run}}^{2} \given \filter_{\run}}
	\notag\\
	&\leq \breg_{\run}
	+ \step_{\run} \exof{\sbias_{\run} \given \filter_{\run}}
	+ \frac{\vbound^{2}}{2\strong} \frac{\step_{\run}^{2}}{\mix_{\run}^{2}}.
\end{flalign}
where
we set $\vbound^{2} = \sum_{\play} \vdim_{\play}^{2} \max_{\state\in\feas} \abs{\pay_{\play}(\state)}^{2}$
and
we used the fact that $\act_{\run}$ is $\filter_{\run}$-measurable, so
\begin{equation}
\exof{\snoise_{\run+1} \given \filter_{\run}}
	= \braket{\exof{\noise_{\run+1}\given\filter_{\run}}}{\act_{\run} - \eq}
	= 0.
\end{equation}
Finally, by \cref{lem:SPSA}, we have
\begin{flalign}
\dnorm{\bias_{\run}}
	&= \dnorm{\payv^{\mix_{\run}}(\act_{\run}^{\mix_{\run}}) - \payv(\act_{\run})}
	\notag\\
	&\leq \dnorm{\payv^{\mix_{\run}}(\act_{\run}^{\mix_{\run}}) - \payv(\act_{\run}^{\mix_{\run}})}
	+ \dnorm{\payv(\act_{\run}^{\mix_{\run}}) - \payv(\act_{\run})}
	\notag\\
	&= \bigoh(\mix_{\run}),
\end{flalign}
where we used the fact that $\payv$ is Lipschitz continuous and $\norm{\payv^{\mix} - \payv}_{\infty} = \bigoh(\mix)$.
This shows that there exists some $\bbound > 0$ such that $\sbias_{\run} \leq \bbound \mix_{\run}$;
as a consequence, we obtain
\begin{equation}
\exof{\breg_{\run+1} \given \filter_{\run}}
	\leq \breg_{\run}
	+ \bbound \step_{\run} \mix_{\run}
	+ \frac{\vbound^{2}}{2\strong} \frac{\step_{\run}^{2}}{\mix_{\run}^{2}}.
\end{equation}

Now, letting $R_{\run} = \breg_{\run} + \sum_{\runalt=\run}^{\infty} \bracks{\bbound \step_{\runalt} \mix_{\runalt} + (2\strong)^{-1} \vbound^{2} \step_{\runalt}^{2}/\mix_{\runalt}^{2} }$, the estimate \eqref{eq:Bregman-new-bound} gives
\begin{flalign}
\exof{R_{\run+1} \given \filter_{\run}}
	&= \exof{\breg_{\run+1} \given \filter_{\run}}
	+ \sum_{\runalt=\run+1}^{\infty} \bracks*{\bbound\step_{\runalt}\mix_{\runalt} + \frac{\vbound^{2}}{2\strong} \frac{\step_{\runalt}^{2}}{\mix_{\runalt}^{2}}}
	\notag\\
	&\leq \breg_{\run}
	+ \bbound \step_{\run} \mix_{\run}
	+ \frac{\vbound^{2}}{2\strong} \frac{\step_{\run}^{2}}{\mix_{\run}^{2}}
	+ \sum_{\runalt=\run+1}^{\infty} \bracks*{\bbound\step_{\runalt}\mix_{\runalt} + \frac{\vbound^{2}}{2\strong} \frac{\step_{\runalt}^{2}}{\mix_{\runalt}^{2}}}
	\notag\\
	&\leq \breg_{\run}
	+ \sum_{\runalt=\run}^{\infty} \bracks*{\bbound\step_{\runalt}\mix_{\runalt} + \frac{\vbound^{2}}{2\strong} \frac{\step_{\runalt}^{2}}{\mix_{\runalt}^{2}}}
	\notag\\
	&= R_{\run},
\end{flalign}
\ie $R_{\run}$ is an $\filter_{\run}$-adapted supermartingale.%
\footnote{In particular, this shows that $\exof{\breg_{\run} \given \filter_{\run-1}}$ is quasi-Fejér in the sense of \cite{Com01}.}
Since the series $\sum_{\run=\start}^{\infty} \step_{\run} \mix_{\run}$ and $\sum_{\run=\start}^{\infty} \step_{\run}^{2} / \mix_{\run}^{2}$ are both summable, it follows that
\begin{flalign}
\exof{R_{\run}}
	&= \exof{\exof{R_{\run} \given \filter_{\run-1}}}
	\notag\\
	&\leq \exof{R_{\run-1}}
	\leq \dotsm
	\leq \exof{R_{\start}}
	\notag\\
	&\leq \exof{\breg_{\start}}
	+ \sum_{\run=\start}^{\infty}
		\bracks*{\bbound\step_{\run}\mix_{\run}
			+ \frac{\vbound^{2}}{2\strong}\frac{\step_{\run}^{2}}{\mix_{\run}^{2}}}
	\notag\\
	&<\infty
\end{flalign}
\ie $R_{\run}$ is uniformly bounded in $L^{1}$.
Thus, by Doob's convergence theorem for supermartingales \citep[Theorem~2.5]{HH80}, it follows that $R_{\run}$ converges \as to some finite random variable $R_{\infty}$.
In turn, by inverting the definition of $R_{\run}$, it follows that $\breg_{\run}$ converges \as to some random variable $\breg_{\infty}$, as claimed.
\end{proof}

\begin{proposition}
\label{prop:subsequence}
Suppose that the assumptions of \cref{thm:convergence} hold.
Then, with probability $1$, there exists a \textpar{random} subsequence $\act_{\run_{\runalt}}$ of \eqref{eq:MD-0} which converges to \acl{NE}.
\end{proposition}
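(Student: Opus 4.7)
The plan is to sharpen the energy-function estimate from the proof of \cref{prop:quasiFejer} by \emph{retaining} the negative drift term provided by the variational characterization \eqref{eq:VI}, and then invoke a Robbins--Siegmund style argument to extract a convergent subsequence.

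More precisely, I would resume the bound derived in the proof of \cref{prop:quasiFejer}, but this time isolate the quantity $Q_{\run} = -\braket{\payv(\act_{\run})}{\act_{\run} - \eq}$, which is \emph{nonnegative} for all $\act_{\run}\in\feas$ by \eqref{eq:VI}. The one-step inequality becomes
\begin{equation*}
\breg_{\run+1}
  \leq \breg_{\run}
  - \step_{\run} Q_{\run}
  + \step_{\run} \snoise_{\run+1}
  + \step_{\run} \sbias_{\run}
  + \frac{\step_{\run}^{2}}{2\strong} \dnorm{\est\payv_{\run}}^{2}.
\end{equation*}
Taking conditional expectations on $\filter_{\run}$ kills the martingale term, while the bias estimate $\dnorm{\bias_{\run}} = \bigoh(\mix_{\run})$ from \cref{lem:SPSA} and the second-moment bound $\exof{\dnorm{\est\payv_{\run}}^{2}\given\filter_{\run}}\leq \vbound^{2}/\mix_{\run}^{2}$ yield
\begin{equation*}
\exof{\breg_{\run+1}\given\filter_{\run}}
  \leq \breg_{\run}
  - \step_{\run} Q_{\run}
  + \bbound\,\step_{\run}\mix_{\run}
  + \frac{\vbound^{2}}{2\strong}\,\frac{\step_{\run}^{2}}{\mix_{\run}^{2}},
\end{equation*}
which fits the standard template of the Robbins--Siegmund theorem: the ``error terms'' $\step_{\run}\mix_{\run}$ and $\step_{\run}^{2}/\mix_{\run}^{2}$ are summable by \eqref{eq:params}, so we obtain simultaneously (i) the \as convergence of $\breg_{\run}$ (recovering \cref{prop:quasiFejer}) and (ii) the summability $\sum_{\run}\step_{\run} Q_{\run} < \infty$ almost surely.

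From here, I would combine the divergence of $\sum \step_{\run}=\infty$ with the summability $\sum\step_{\run}Q_{\run}<\infty$ to conclude that $\liminf_{\run}Q_{\run}=0$ \as; otherwise $Q_{\run}$ would be bounded below by some $\eps>0$ eventually, forcing $\sum\step_{\run}Q_{\run}=\infty$ and contradicting the Robbins--Siegmund conclusion. Thus, on an event of probability $1$, there exists a random subsequence $\run_{\runalt}\to\infty$ along which $Q_{\run_{\runalt}}\to 0$. Compactness of $\feas$ lets us pass to a further subsequence along which $\act_{\run_{\runalt}}\to\acc\act$ for some $\acc\act\in\feas$, and continuity of $\payv$ gives $\braket{\payv(\acc\act)}{\acc\act - \eq} = 0$. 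Since \eqref{eq:VI} forces this pairing to be strictly negative whenever $\acc\act\neq\eq$, we must have $\acc\act = \eq$, establishing subsequential convergence to Nash equilibrium.

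The main obstacle is the careful bookkeeping required to simultaneously (a) retain the drift term $\step_{\run}Q_{\run}$ that is dropped in \cref{prop:quasiFejer} and (b) absorb the martingale noise $\snoise_{\run+1}$ and the two bias/variance error terms into the Robbins--Siegmund template. A secondary subtlety is that the subsequence produced is \emph{random}, depending on the realization of the noise, so the limit $\acc\act$ is a priori also random; this is harmless here because strict monotonicity forces the limit to coincide with the unique Nash equilibrium $\eq$. Note that $\acc\act_{\run_{\runalt}} = \act_{\run_{\runalt}} + \mix_{\run_{\runalt}}\pertvar_{\run_{\runalt}}$ also converges to $\eq$, since $\mix_{\run}\to 0$ and $\pertvar_{\run}$ is bounded, which yields the statement for the realized actions.
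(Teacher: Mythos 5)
Your argument is correct, but it takes a genuinely different route from the paper's. The paper proves \cref{prop:quasiFejer} by \emph{discarding} the nonpositive drift term $\step_{\run}\braket{\payv(\act_{\run})}{\act_{\run}-\eq}$ and building an auxiliary supermartingale, and then establishes the subsequence claim separately, by contradiction: assuming $\act_{\run}$ eventually stays in a compact set $\cpt$ disjoint from the equilibrium set, it extracts a uniform bound $\braket{\payv(\state)}{\state-\eq}\leq-\farbound<0$ on $\cpt$, telescopes the one-step inequality, and controls the time-averaged noise, bias and second-moment terms pathwise (via the strong law of large numbers for martingale difference sequences and Doob's submartingale convergence theorem) to force $\breg(\eq,\act_{\run})\to-\infty$, a contradiction. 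You instead \emph{retain} the drift $Q_{\run}=-\braket{\payv(\act_{\run})}{\act_{\run}-\eq}\geq 0$ and apply Robbins--Siegmund, which yields in one stroke both the \as convergence of $\breg_{\run}$ (recovering \cref{prop:quasiFejer}) and the summability $\sum_{\run}\step_{\run}Q_{\run}<\infty$; together with $\sum_{\run}\step_{\run}=\infty$ this gives $\liminf_{\run}Q_{\run}=0$ \as, and compactness of $\feas$ plus the strict inequality in \eqref{eq:VI} pins the subsequential limit to $\eq$. Your route is shorter, dispenses with the contradiction scaffolding and the separate pathwise control of the three error terms, and directly identifies the limit point as the equilibrium used in the energy function; the paper's version buys self-containedness (it leans only on the two Hall--Heyde results it already cites rather than on the Robbins--Siegmund lemma) and an argument that would still localize limit points in the solution set of \eqref{eq:VI} even if one only had the non-strict negativity $\braket{\payv(\state)}{\state-\eq}\leq 0$ pointwise, since it needs only the uniform bound on compacta away from $\eqset$. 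Both proofs are valid under the paper's standing assumptions.
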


\begin{proof}
We begin with the technical observation that the set $\eqset$ of \aclp{NE} of $\game$ is closed (and hence, compact).
Indeed, let $\eq_{\run}$, $\run=1,2,\dotsc$, be a sequence of \aclp{NE} converging to some limit point $\eq\in\feas$;
to show that $\eqset$ is closed, it suffices to show that $\eq\in\feas$.
However, since \aclp{NE} of $\game$ satisfy the variational characterization \eqref{eq:VI}, we also have $\braket{\payv(\state)}{\state - \eq_{\run}} \leq 0$ for all $\state\in\feas$.
Hence, with $\eq_{\run}\to\eq$ as $\run\to\infty$, it follows that
\begin{equation}
\braket{\payv(\state)}{\state-\eq}
	= \lim_{\run\to\infty} \braket{\payv(\state)}{\state - \eq_{\run}}
	\leq 0
	\quad
	\text{for all $\state\in\feas$},
\end{equation}
\ie $\eq$ satisfies \eqref{eq:VI}.
Since $\game$ is monotone, we conclude that $\eq$ is a \acl{NE}, as claimed.

Suppose now ad absurdum that, with positive probability, the pivot sequence $\act_{\run}$ generated by \eqref{eq:MD-0} admits no limit points in $\eqset$.%
\footnote{We assume here without loss of generality that $\eqset\neq\feas$; otherwise, there is nothing to show.}
Conditioning on this event, and given that $\eqset$ is compact, there exists a (nonempty) compact set $\cpt\subset\feas$ such that $\cpt\cap\eqset = \varnothing$ and $\act_{\run}\in\cpt$ for all sufficiently large $\run$.
Moreover, by \eqref{eq:VI}, we have $\braket{\payv(\state)}{\state-\eq} < 0$ whenever $\state\in\cpt$ and $\eq\in\eqset$.
Therefore, by the continuity of $\payv$ and the compactness of $\eqset$ and $\cpt$, there exists some $\farbound>0$ such that
\begin{equation}
\label{eq:farbound}
\braket{\payv(\state)}{\state - \eq}	
	\leq -\farbound
	\quad
	\text{for all $\state\in\cpt$, $\eq\in\feas$}.
\end{equation}

To proceed, fix some $\eq\in\eqset$ and let $\breg_{\run} = \breg(\eq,\act_{\run})$ as in the proof of \cref{prop:quasiFejer}.
Then, telescoping \eqref{eq:Bregman-new-bound} yields the estimate
\begin{flalign}
\label{eq:Bregman-run}
\breg_{\run+1}
	&\leq \breg_{\start}
	+ \sum_{\runalt=\start}^{\run} \step_{\runalt} \braket{\payv(\act_{\run})}{\act_{\run} - \eq}
	+ \sum_{\runalt=\start}^{\run} \step_{\runalt} \snoise_{\runalt+1}
	+ \sum_{\runalt=\start}^{\run} \step_{\runalt} \sbias_{\runalt}
	+ \sum_{\runalt=\start}^{\run} \frac{\step_{\runalt}^{2}}{2\strong} \dnorm{\est\payv_{\run}}^{2},
\end{flalign}
where, as in the proof of \cref{prop:quasiFejer}, we set
\begin{flalign}
\label{eq:noise-scalar}
\snoise_{\run+1}
	&= \braket{\noise_{\run+1}}{\act_{\run} - \eq}
\intertext{and}
\label{eq:bias-scalar}
\sbias_{\run}
	&= \braket{\bias_{\run}}{\act_{\run} - \eq}.
\end{flalign}
Subsequently, letting $\tau_{\run} = \sum_{\runalt=\start}^{\run} \step_{\runalt}$ and using \eqref{eq:farbound}, we obtain
\begin{equation}
\label{eq:Bregman-run-bound}
\breg_{\run+1}
	\leq \breg_{\start}
	- \tau_{\run} \bracks*{
		\farbound
		- \frac{\sum_{\runalt=\start}^{\run} \step_{\runalt} \snoise_{\runalt+1}}{\tau_{\run}}
		- \frac{\sum_{\runalt=\start}^{\run} \step_{\runalt} \sbias_{\runalt}}{\tau_{\run}}
		- \frac{(2\strong)^{-1} \sum_{\runalt=\start}^{\run} \step_{\runalt}^{2} \dnorm{\est\payv_{\runalt}}^{2}}{\tau_{\run}}
	}.
\end{equation}

Since $\noise_{\run}$ is a \acl{MDS} with respect to $\filter_{\run}$, we have $\exof{\snoise_{\run+1} \given \filter_{\run}} = 0$ (recall that $\act_{\run}$ is $\filter_{\run}$-measurable by construction).
Moreover, by construction, there exists some constant $\noisedev>0$ such that
\begin{equation}
\dnorm{\noise_{\run+1}}^{2}
	\leq \frac{\noisevar}{\mix_{\run}^{2}},
\end{equation}
and hence:
\begin{flalign}
\sum_{\run=\start}^{\infty} \step_{\run}^{2} \exof{\snoise_{\run+1}^{2} \given \filter_{\run}}
	&\leq \sum_{\run=\start}^{\infty} \step_{\run}^{2} \norm{\act_{\run} - \eq}^{2} \exof{\dnorm{\noise_{\run+1}}^{2}\given\filter_{\run}}
	\notag\\
	&\leq \diam(\feas)^{2} \noisevar \sum_{\run=\start}^{\infty} \frac{\step_{\run}^{2}}{\mix_{\run}^{2}}
	< \infty.
\end{flalign}
Therefore, by the law of large numbers for \aclp{MDS} \citep[Theorem~2.18]{HH80}, we conclude that $\tau_{\run}^{-1} \sum_{\runalt=\start}^{\run} \step_{\runalt} \snoise_{\runalt+1}$ converges to $0$ with probability $1$.

For the third term in the brackets of \eqref{eq:Bregman-run-bound} we have $\sbias_{\run}\to0$ as $\run\to\infty$ \as.
Since $\sum_{\run=\start}^{\infty} \step_{\run} = \infty$, it follows $\sum_{\runalt=\start}^{\run} \step_{\runalt} \sbias_{\runalt} \big/ \sum_{\runalt=\start}^{\run} \step_{\runalt} \to 0$.

Finally, for the last term in the brackets of \eqref{eq:Bregman-run-bound}, let $S_{\run+1} = \sum_{\runalt=\start}^{\run} \step_{\runalt}^{2} \dnorm{\est\payv_{\runalt}}^{2}$.
Since $\est\payv_{\runalt}$ is $\filter_{\run}$-measurable for all $\runalt=\running,\run-1$, we have
\begin{equation}
\exof{S_{\run+1} \given \filter_{\run}}
	= \exof*{%
		\sum_{\runalt=\start}^{\run-1} \step_{\runalt}^{2} \dnorm{\est\payv_{\runalt}}^{2}
		+ \step_{\run}^{2} \dnorm{\est\payv_{\run}}^{2} \given \filter_{\run}
		}
	= S_{\run} + \step_{\run}^{2} \exof{\dnorm{\est\payv_{\run}}^{2} \given \filter_{\run}}
	\geq S_{\run},
\end{equation}
\ie $S_{\run}$ is a submartingale with respect to $\filter_{\run}$.
Furthermore, by the law of total expectation, we also have
\begin{equation}
\exof{S_{\run+1}}
	= \exof{\exof{S_{\run+1} \given \filter_{\run}}}
	\leq \vbound^{2} \sum_{\runalt=\start}^{\run} \frac{\step_{\runalt}^{2}}{\mix_{\runalt}^{2}}
	\leq \vbound^{2} \sum_{\runalt=\start}^{\infty} \frac{\step_{\runalt}^{2}}{\mix_{\runalt}^{2}}
	< \infty,
\end{equation}
implying in turn that $S_{\run}$ is uniformly bounded in $L^{1}$.
Hence, by Doob's submartingale convergence theorem \citep[Theorem~2.5]{HH80}, we conclude that $S_{\run}$ converges to some (almost surely finite) random variable $S_{\infty}$ with $\exof{S_{\infty}} < \infty$.
Consequently, we have $\lim_{\run\to\infty} S_{\run+1}/\tau_{\run} = 0$ with probability $1$.

Applying all of the above to the estimate \eqref{eq:Bregman-run-bound}, we get $\breg_{\run+1} \leq \breg_{\start} - \farbound \tau_{\run} / 2$ for sufficiently large $\run$,
and hence, $\breg(\eq,\act_{\run}) \to -\infty$, a contradiction.
Going back to our original assumption, this shows that at least one of the limit points of $\act_{\run}$ must lie in $\eqset$, so our proof is complete.
\end{proof}

We are finally in a position to prove \cref{thm:convergence} regarding the convergence of \eqref{eq:MD-0}:

\begin{proof}[Proof of \cref{thm:convergence}]
By \cref{prop:subsequence}, there exists a (possibly random) \acl{NE} $\eq$ of $\game$ such that $\norm{\act_{\run_{\runalt}} - \eq} \to 0$ for some (random) subsequence $\act_{\run_{\runalt}}$.
By the assumed reciprocity of the Bregman divergence, this implies that $\liminf_{\run\to\infty} \breg(\eq,\act_{\run}) = 0$ \as.
Since $\lim_{\run\to\infty} \breg(\eq,\act_{\run})$ exists with probability $1$ (by \cref{prop:quasiFejer}), it follows that
\begin{equation}
\lim_{\run\to\infty} \breg(\eq,\act_{\run})
	= \liminf_{\run\to\infty} \breg(\eq,\act_{\run})
	= 0,
\end{equation}
\ie $\act_{\run}$ converges to $\eq$ by the first part of \cref{prop:Bregman}.
Since $\mix_{\run}\to0$ and $\norm{\est\act_{\run} - \act_{\run}} = \mix_{\run} \norm{\pertvar_{\run}}= \bigoh(\mix_{\run})$, our claim follows.
\end{proof}

\section{Finite-time analysis and rates of convergence}
\label{app:rates}

We now turn to the finite-time analysis of \eqref{eq:MD-0}.
To begin, we briefly recall that a game $\game$ is \emph{$\str$-strongly monotone} if it satisfies the condition
\begin{equation}
\label{eq:DSC-str}
\tag{$\str$-DSC}
\sum_{\play\in\players} \lambda_{\play} \braket{\payv_{\play}(\alt \state) - \payv_{\play}(\state)}{\alt \state_{\play} - \state_{\play}}
	\leq -\frac{\str}{2} \norm{\state - \alt\state}^{2}
\end{equation}
for some $\lambda_{\play},\str>0$ and for all $\state,\alt \state\in\feas$.
Our aim in what follows will be to prove the following convergence rate estimate for multi-agent \acl{MD} in strongly monotone games:

\begin{theorem}
\label{thm:rates}
Let $\eq$ be the \textpar{unique} \acl{NE} of a $\str$-strongly monotone game.
Then:
\begin{enumerate}
[\hspace{2em}\itshape a\upshape)]
\item
If the players have access to a gradient oracle satisfying \eqref{eq:oracle} and they follow \eqref{eq:MD} with Euclidean projections and step-size sequence $\step_{\run} = \step/\run$ for some $\step > 1/\str$, we have
\begin{equation}
\label{eq:rate-oracle}
\exof{\norm{\act_{\run} - \eq}^{2}}
	= \bigoh(\run^{-1}).
\end{equation}
\item
If the players only have bandit feedback and they follow \eqref{eq:MD-0} with Euclidean projections and parameters $\step_{\run} = \step/\run$ and $\mix_{\run} = \mix/\run^{1/3}$ with $\step > 1/(3\str)$ and $\mix>0$, we have
\begin{equation}
\label{eq:rate}
\exof{\norm{\est\act_{\run} - \eq}^{2}}
	= \bigoh(\run^{-1/3}).
\end{equation}
\end{enumerate}
\end{theorem}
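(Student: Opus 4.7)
The plan is to reduce both parts to a scalar recursion of Chung type by using the squared Euclidean norm $V_{\run} = \norm{\act_{\run} - \eq}^{2}$ as an energy function, and then to exploit strong monotonicity to extract a contraction. Throughout I will use $\breg(\eq,\act) = \frac{1}{2}\norm{\act - \eq}^{2}$ (so $\strong = 1$) and, applying inequality \eqref{eq:Bregman-old2new-alt} to $\act_{\run+1} = \Eucl(\act_{\run} + \step_{\run}\est\payv_{\run})$, obtain the fundamental bound
\begin{equation*}
\tfrac{1}{2}\norm{\act_{\run+1} - \eq}^{2}
	\leq \tfrac{1}{2}\norm{\act_{\run} - \eq}^{2}
	+ \step_{\run}\braket{\est\payv_{\run}}{\act_{\run} - \eq}
	+ \tfrac{1}{2}\step_{\run}^{2}\dnorm{\est\payv_{\run}}^{2}.
\end{equation*}

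For part (a), decompose $\est\payv_{\run} = \payv(\act_{\run}) + \noise_{\run+1}$ with $\noise_{\run+1}$ a zero-mean, $\filter_{\run}$-adapted \acl{MDS} of bounded variance. Using \eqref{eq:DSC-str} with $\alt\state = \act_{\run}$, $\state = \eq$ together with the \acl{NE} inequality \eqref{eq:NE-var} gives the one-sided bound $\braket{\payv(\act_{\run})}{\act_{\run} - \eq} \leq -\tfrac{\str}{2}\norm{\act_{\run} - \eq}^{2}$. Taking total expectations and writing $E_{\run} = \exof{\norm{\act_{\run} - \eq}^{2}}$ yields
\begin{equation*}
E_{\run+1}
	\leq (1 - \str\step_{\run})E_{\run} + \vbound^{2}\step_{\run}^{2}.
\end{equation*}
Plugging in $\step_{\run} = \step/\run$ with $\str\step > 1$, the claimed rate $E_{\run} = \bigoh(1/\run)$ follows from the standard Chung lemma: if $E_{\run+1} \leq (1 - a/\run)E_{\run} + b/\run^{1+p}$ with $a > p$, then $E_{\run} = \bigoh(\run^{-p})$ (induction with an ansatz $E_{\run} \leq K/\run^{p}$, $K$ large).

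For part (b), the decomposition of \eqref{eq:grad-est} gives $\est\payv_{\run} = \payv(\act_{\run}) + \noise_{\run+1} + \bias_{\run}$ with $\dnorm{\bias_{\run}} = \bigoh(\mix_{\run})$ (by \cref{lem:SPSA}) and $\exof{\dnorm{\est\payv_{\run}}^{2} \mid \filter_{\run}} = \bigoh(\mix_{\run}^{-2})$ since $\pay$ is bounded and the SPSA estimator carries a $\vdim/\mix_{\run}$ prefactor. The bias term is absorbed via $\abs{\braket{\bias_{\run}}{\act_{\run} - \eq}} \leq \diam(\feas)\,\dnorm{\bias_{\run}} = \bigoh(\mix_{\run})$, so after taking expectations we arrive at a refined Chung-type recursion
\begin{equation*}
E_{\run+1}
	\leq (1 - \str\step_{\run})E_{\run} + C_{1}\step_{\run}\mix_{\run} + C_{2}\step_{\run}^{2}/\mix_{\run}^{2}.
\end{equation*}
With $\step_{\run} = \step/\run$ and $\mix_{\run} = \mix/\run^{1/3}$, both residual terms are $\bigoh(\run^{-4/3})$, so the recursion becomes $E_{\run+1} \leq (1 - \str\step/\run)E_{\run} + C/\run^{4/3}$. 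Applying the same Chung lemma with $p = 1/3$ and noting that the threshold $\str\step > p$ is exactly the hypothesis $\step > 1/(3\str)$, we obtain $E_{\run} = \bigoh(\run^{-1/3})$. To pass from $\act_{\run}$ to the realized action $\est\act_{\run}$, expand $\norm{\est\act_{\run} - \eq}^{2} = \norm{\act_{\run} - \eq}^{2} + 2\mix_{\run}\braket{\pertvar_{\run}}{\act_{\run} - \eq} + \mix_{\run}^{2}\norm{\pertvar_{\run}}^{2}$; since $\norm{\pertvar_{\run}}$ is bounded on the feasibility-adjusted set, Cauchy\textendash Schwarz gives $\exof{\norm{\est\act_{\run} - \eq}^{2}} \leq E_{\run} + \bigoh(\mix_{\run}) = \bigoh(\run^{-1/3})$, as required.

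The main obstacle is the tight matching of thresholds in the Chung-type lemma: the hypothesis $\step > 1/(3\str)$ is exactly what makes the contraction factor $(1 - \str\step/\run)$ dominate the residual $\bigoh(\run^{-4/3})$ in the induction, and any cruder treatment of either the bias $\bigoh(\step_{\run}\mix_{\run})$ or the variance $\bigoh(\step_{\run}^{2}/\mix_{\run}^{2})$ terms would destroy this balance. A secondary subtlety is ensuring that the $\bigoh(\mix_{\run})$ gap between the pivot $\act_{\run}$ and the queried point $\est\act_{\run}$ is no larger than $\bigoh(\run^{-1/3})$ under the chosen schedule \textendash{} which is exactly why $\mix_{\run} \propto \run^{-1/3}$ is also the slowest shrinkage compatible with the target rate.
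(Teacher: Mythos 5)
Your proposal is correct and follows essentially the same route as the paper: the same Euclidean Bregman estimate, the same bias/noise decomposition of the SPSA feedback, the same use of strong monotonicity plus the equilibrium inequality to obtain the contraction $(1-\str\step_{\run})$, and the same Chung-type recursion with residuals $\bigoh(\step_{\run}\mix_{\run})$ and $\bigoh(\step_{\run}^{2}/\mix_{\run}^{2})$ balanced at $\bigoh(\run^{-4/3})$. Your explicit treatment of the passage from the pivot $\act_{\run}$ to the realized action $\est\act_{\run}$ is a welcome addition that the paper's rate proof leaves implicit.
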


\begin{remark*}
\cref{thm:rate} is recovered by the second part of \cref{thm:rates} above;
the first part (which was alluded to in the main paper) serves as a benchmark to quantify the gap between bandit and oracle feedback.
\end{remark*}

For the proof of \cref{thm:rates} we will need the following lemma on numerical sequences, a version of which is often attributed to \cite{Chu54}:

\begin{lemma}
\label{lem:Chung}
Let $\seq_{\run}$, $\run=\running$, be a non-negative sequence such that
\begin{equation}
\label{eq:seq-bound}
\seq_{\run+1}
	\leq \seq_{\run} \parens*{1 - \frac{\pbound}{\run^{\pexp}}}
	+ \frac{\qbound}{\run^{\pexp+\qexp}}
\end{equation}
where $0<\pexp\leq1$, $\qexp>0$, and $\pbound, \qbound>0$.
Then, assuming $\pbound>\qexp$ if $\pexp=1$, we have
\begin{equation}
\label{eq:seq-rate}
\seq_{\run}
	\leq \frac{\qbound}{\rbound} \frac{1}{\run^{\qexp}}
	+ o\parens*{\frac{1}{\run^{\qexp}}},
\end{equation}
with $\rbound = \pbound$ if $\pexp<1$ and $\rbound = \pbound - \qexp$ if $\pexp=1$.
\end{lemma}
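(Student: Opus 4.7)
My plan is to renormalize the sequence by setting $b_\run := \run^\qexp \seq_\run$, which turns the desired conclusion $\seq_\run \leq (\qbound/\rbound) \run^{-\qexp} + o(\run^{-\qexp})$ into the cleaner statement $\limsup_{\run\to\infty} b_\run \leq \qbound/\rbound$. Multiplying \eqref{eq:seq-bound} through by $(\run+1)^\qexp$ and expanding $(1+1/\run)^\qexp = 1 + \qexp/\run + \bigoh(1/\run^2)$, I obtain
\begin{equation*}
b_{\run+1} \leq b_\run \bracks*{1 - \tfrac{\pbound}{\run^\pexp} + \tfrac{\qexp}{\run} + \bigoh(\run^{-\pexp-1})} + \tfrac{\qbound}{\run^\pexp} + \bigoh(\run^{-\pexp-1}).
\end{equation*}
When $\pexp < 1$ the reparametrization correction $\qexp/\run$ is strictly of lower order than the leading decay $\pbound/\run^\pexp$; in the borderline case $\pexp = 1$ the two combine to give an effective rate $\pbound - \qexp = \rbound > 0$ (positive by hypothesis). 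In either regime, for any $\rbound' < \rbound$ and $\qbound' > \qbound$ there exists some threshold $N$ such that, for all $\run \geq N$,
\begin{equation*}
b_{\run+1} \leq b_\run \parens*{1 - \tfrac{\rbound'}{\run^\pexp}} + \tfrac{\qbound'}{\run^\pexp}. \qquad (\star)
\end{equation*}

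For the main step, I would use a standard fixed-point/trapping argument to deduce from $(\star)$ that $\limsup b_\run \leq \qbound'/\rbound'$. Fix $\delta > 0$ and set $c := \qbound'/\rbound' + \delta$. If $b_\run \leq c$ at some $\run \geq N$, plugging into $(\star)$ yields $b_{\run+1} \leq c - \delta \rbound'/\run^\pexp < c$, so the bound propagates indefinitely. If instead $b_\run > c$ for \emph{every} $\run \geq N$, then $(\star)$ gives $b_{\run+1} - b_\run \leq -\delta \rbound'/\run^\pexp$, and telescoping would force $b_\run \to -\infty$ because $\sum_\run 1/\run^\pexp = \infty$ for $\pexp \leq 1$, contradicting $b_\run \geq 0$. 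Hence $\limsup b_\run \leq c$; sending $\delta \to 0^+$, $\qbound' \to \qbound^+$ and $\rbound' \to \rbound^-$ then yields $\limsup b_\run \leq \qbound/\rbound$, which is precisely \eqref{eq:seq-rate}.

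The main obstacle is the borderline case $\pexp = 1$: there the reparametrization correction $\qexp/\run$ and the leading decay $\pbound/\run^\pexp$ have matching order, so I must carefully verify that the hypothesis $\pbound > \qexp$ leaves enough slack to absorb the $\bigoh(\run^{-2})$ remainder into a strictly positive effective rate $\rbound' < \pbound - \qexp$. For $\pexp < 1$ this subtlety disappears, since $\qexp/\run$ is strictly of lower order than $\pbound/\run^\pexp$ and the effective rate is simply $\pbound$ itself; the trapping argument and the divergence of $\sum 1/\run^\pexp$ are then entirely routine.
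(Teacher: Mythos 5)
Your proposal is correct and follows essentially the same route as the paper's proof: renormalize via $b_{\run} = \run^{\qexp}\seq_{\run}$, absorb the reparametrization correction into perturbed constants $\rbound'<\rbound$, $\qbound'>\qbound$, and use the divergence of $\sum_{\run}\run^{-\pexp}$ to force $\limsup_{\run} b_{\run} \leq \qbound'/\rbound'$ before sending the perturbations to zero. Your two-case trapping argument is, if anything, slightly more explicit than the paper's telescoping step (just note that propagating the bound $b_{\run}\leq c$ forward requires $1-\rbound'/\run^{\pexp}\geq 0$, which holds for $N$ large enough), so no changes are needed.
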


\begin{proof}
Clearly, it suffices to show that $\limsup_{\run\to\infty} \run^{\qexp} \seq_{\run} \leq \qbound/\rbound$.
To that end, write $\qexp_{\run} = \run \bracks{(1+1/\run)^{\qexp} - 1}$, so $(1+1/\run)^{\qexp} = 1 + \qexp_{\run}/\run$ and $\qexp_{\run} \to \qexp$ as $\run\to\infty$.
Then, multiplying both sides of \eqref{eq:seq-bound} by $(\run+1)^{\qexp}$ and letting $\var\seq_{\run} = \seq_{\run} \run^{\qexp}$, we get
\begin{flalign}
\label{eq:seq-rate-temp1}
\var\seq_{\run+1}
	&\leq \seq_{\run}(\run+1)^{\qexp} \parens*{1 - \frac{\pbound}{\run^{\pexp}}}
	+ \frac{\qbound (\run+1)^{\qexp}}{\run^{\pexp+\qexp}}
	\notag\\
	&= \var\seq_{\run} \parens*{1 + \frac{\qexp_{\run}}{\run}} \parens*{1 - \frac{\pbound}{\run^{\pexp}}}
	+ \frac{\qbound(1+\qexp_{\run}/\run)}{\run^{\pexp}}
	\notag\\
	&= \var\seq_{\run} \bracks*{1 + \frac{\qexp_{\run}}{\run} - \frac{\pbound}{\run^{\pexp}} + \bigoh\parens*{\frac{1}{\run^{\pexp+1}}}}
	+ \frac{\qbound_{\run}}{\run^{\pexp}},
\end{flalign}
where we set $\qbound_{\run} = \qbound(1+\qexp_{\run}/\run)$, so $\qbound_{\run}\to \qbound$ as $\run\to\infty$.
Then, under the assumption that $\pbound>\qexp$ when $\pexp=1$, \eqref{eq:seq-rate-temp1} can be rewritten as
\begin{equation}
\label{eq:seq-rate-temp2}
\var\seq_{\run+1}
	\leq \var\seq_{\run} \parens*{1 - \frac{\rbound_{\run}}{\run^{\pexp}}}
	+ \frac{\qbound_{\run}}{\run^{\pexp}},
\end{equation}
for some sequence $\rbound_{\run}$ with $\rbound_{\run}\to \rbound$ as $\run\to\infty$.

Now, fix some small enough $\eps>0$.
From \eqref{eq:seq-rate-temp2}, we readily get
\begin{equation}
\label{eq:seq-rate-temp3}
\var\seq_{\run+1}
	\leq \var\seq_{\run}
	- \frac{\rbound_{\run} \var\seq_{\run} - \qbound_{\run}}{\run^{\pexp}}.
\end{equation}
Since $\rbound_{\run}\to\rbound$ and $\qbound_{\run}\to\qbound$ as $\run\to\infty$, we will have $\rbound_{\run} > \rbound-\eps$ and $\qbound_{\run} < \qbound+\eps$ for all $\run$ greater than some $\run_{\eps}$.
Thus, if $\run\geq\run_{\eps}$ and $(\rbound-\eps) \var\seq_{\run} - (\qbound+\eps) > \eps$, we will also have
\begin{equation}
\var\seq_{\run+1}
	\leq \var\seq_{\run} - \frac{\rbound_{\run}\var\seq_{\run} - \qbound_{\run}}{\run^{\pexp}}
	\leq \var\seq_{\run} - \frac{(\rbound - \eps)\var\seq_{\run} - (\qbound+\eps)}{\run^{\pexp}}
	\leq \var\seq_{\run} - \frac{\eps}{\run^{\pexp}}.
\end{equation}
The above shows that, as long as $\var\seq_{\run} > (\qbound+2\eps) / (\rbound - \eps)$, $\var\seq_{\run}$ will decrease at least by $\eps/\run^{\pexp}$ at each step.
In turn, since $\sum_{\run=\start}^{\infty} (1/\run^{\pexp}) = \infty$, it follows by telescoping that $\limsup_{\run\to\infty} \var\seq_{\run} \leq (\qbound+2\eps)/(\rbound - \eps)$.
Hence, with $\eps$ arbitrary, we conclude that $\limsup_{\run\to\infty} \seq_{\run} \run^{\qexp} \leq \qbound/\rbound$, as claimed.
\end{proof}

\begin{proof}[Proof of \cref{thm:rates}]
We begin with the second part of the theorem;
the first part will follow by setting some estimates equal to zero, so the analysis is more streamlined that way.
Also, as in the previous section, we tacitly assume that \eqref{eq:DSC-str} holds with weights $\lambda_{\play} = 1$ for all $\play\in\players$.
If this is not the case, the Bregman divergence $\breg(\base,\state)$ should be replaced by the weight-adjusted variant \eqref{eq:Bregman-lambda}, but this would only make the presentation more difficult to follow, so we omit the details.

The main component of our proof is the estimate \eqref{eq:Bregman-new-bound}, which, for convenience (and with notation as in the previous section), we also reproduce below:
\begin{flalign}
\label{eq:Bregman-new-bound2}
\breg_{\run+1}
	\leq \breg_{\run}
	+ \step_{\run} \braket{\payv(\act_{\run})}{\act_{\run} - \eq}
	+ \step_{\run} \snoise_{\run+1}
	+ \step_{\run} \sbias_{\run}
	+ \frac{\step_{\run}^{2}}{2\strong} \dnorm{\est\payv_{\run}}^{2}.
\end{flalign}
In the above, since the algorithm is run with Euclidean projections, $\breg_{\run} = \frac{1}{2} \norm{\act_{\run} - \sol}^{2}$;
other than that, $\snoise_{\run}$ and $\sbias_{\run}$ are defined as in \eqref{eq:noise-scalar} and \eqref{eq:bias-scalar} respectively.
Since the game is $\str$-strongly monotone and $\eq$ is a \acl{NE}, we further have
\begin{equation}
\braket{\payv(\act_{\run})}{\act_{\run} - \eq}
	\leq \braket{\payv(\act_{\run}) - \payv(\eq)}{\act_{\run} - \eq}
	\leq - \frac{\str}{2} \norm{\act_{\run} - \eq}^{2}
	= - \str \breg_{\run},
\end{equation}
so \eqref{eq:Bregman-new-bound2} becomes
\begin{equation}
\label{eq:Bregman-new-strong}
\breg_{\run+1}
	\leq (1 - \str\step_{\run}) \breg_{\run}
	+ \step_{\run} \snoise_{\run+1}
	+ \step_{\run} \sbias_{\run}
	+ \frac{\step_{\run}^{2}}{2\strong} \dnorm{\est\payv_{\run}}^{2}.
\end{equation}
Thus, letting $\bar\breg_{\run} = \exof{\breg_{\run}}$ and taking expectations, we obtain
\begin{equation}
\label{eq:Bregman-new-strong-mean}
\bar\breg_{\run+1}
	\leq (1 - \str\step_{\run}) \bar\breg_{\run}
	+ \bbound \step_{\run} \mix_{\run}
	+ \frac{\vbound^{2}}{2\strong} \frac{\step_{\run}^{2}}{\mix_{\run}^{2}},
\end{equation}
with $\bbound$ and $\vbound$ defined as in the proof of \cref{thm:convergence} in the previous section.

Now, substituting $\step_{\run} = \step/\run^{\pexp}$ and $\mix_{\run} = \mix/\run^{\qexp}$ in \eqref{eq:Bregman-new-strong-mean} readily yields
\begin{equation}
\label{eq:Bregman-new-steps}
\bar\breg_{\run+1}
	\leq \parens*{1 - \frac{\str\step}{\run^{\pexp}}} \bar\breg_{\run}
	+ \frac{\bbound\step\mix}{\run^{\pexp+\qexp}}
	+ \frac{\vbound^{2}\step^{2}\mix^{2}}{2\strong \run^{2(\pexp - \qexp)}}.
\end{equation}
Hence, taking $\pexp=1$ and $\qexp=1/3$, the last two exponents are equated, leading to the estimate
\begin{equation}
\bar\breg_{\run+1}
	\leq \parens*{1 - \frac{\str\step}{\run}} \bar\breg_{\run}
	+ \frac{\cbound}{\run^{4/3}},
\end{equation}
with $\cbound = \step\mix\bbound + (2\strong)^{-1} \step^{2}\mix^{2}\vbound^{2}$.
Thus, with $\str\step>1/3$, applying \cref{lem:Chung} with $\pexp=1$ and $\qexp = 1/3$, we finally obtain $\bar\breg_{\run} = \bigoh(1/\run^{1/3})$.

The proof for the oracle case is similar:
the key observation is that the bound \eqref{eq:Bregman-new-strong-mean} becomes
\begin{equation}
\label{eq:Bregman-new-strong-oracle}
\bar\breg_{\run+1}
	\leq (1 - \str\step_{\run}) \bar\breg_{\run}
	+ \frac{\vbound^{2}}{2\strong} \step_{\run}^{2},
\end{equation}
with $\vbound$ defined as in \eqref{eq:oracle}.
Hence, taking $\step_{\run} = \step/\run$ with $\str\step>1$ and applying again \cref{lem:Chung} with $\pexp=\qexp=1$, we obtain $\bar\breg_{\run} = \bigoh(1/\run)$ and our proof is complete.
\end{proof}

To conclude, we note that the $\bigoh(1/\run^{1/3})$ bound of \cref{thm:rates} cannot be readily improved by choosing a different step-size schedule of the form $\step_{\run}\propto 1/\run^{\pexp}$ for some $\pexp<1$.
Indeed, applying \cref{lem:Chung} to the estimate \eqref{eq:Bregman-new-steps} yields a bound which is either $\bigoh(1/\run^{\qexp})$ or $\bigoh(1/\run^{\pexp - 2\qexp})$, depending on which exponent is larger.
Equating the two exponents (otherwise, one term would be slower than the other), we get $\qexp = \pexp/3$, leading again to a $\bigoh(1/\run^{1/3})$ bound.
Unless one has finer control on the bias/variance of the \ac{SPSA} gradient estimator used in \eqref{eq:MD-0}, we do not see a way of improving this bound in the current context.

\bibliographystyle{ormsv080}
\bibliography{IEEEabrv,../bibtex/Bibliography-MD}

\begin{thebibliography}{40}
\expandafter\ifx\csname natexlab\endcsname\relax\def\natexlab#1{#1}\fi
\expandafter\ifx\csname url\endcsname\relax
  \def\url#1{{\tt #1}}\fi
\expandafter\ifx\csname urlprefix\endcsname\relax\def\urlprefix{URL }\fi
\expandafter\ifx\csname urlstyle\endcsname\relax
  \expandafter\ifx\csname doi\endcsname\relax
  \def\doi#1{doi:\discretionary{}{}{}#1}\fi \else
  \expandafter\ifx\csname doi\endcsname\relax
  \def\doi{doi:\discretionary{}{}{}\begingroup \urlstyle{rm}\Url}\fi \fi

\bibitem[{Agarwal et~al.(2010)Agarwal, Dekel, and Xiao}]{ADX10}
Agarwal, Alekh, O.~Dekel, L.~Xiao. 2010.
\newblock Optimal algorithms for online convex optimization with multi-point
  bandit feedback.
\newblock {\it COLT '10: Proceedings of the 23rd Annual Conference on Learning
  Theory\/}.

\bibitem[{Arora et~al.(2012)Arora, Hazan, and Kale}]{AHK12}
Arora, Sanjeev, Elad Hazan, Satyen Kale. 2012.
\newblock The multiplicative weights update method: A meta-algorithm and
  applications.
\newblock {\it Theory of Computing\/} {\bf 8}(1) 121--164.

\bibitem[{Auer et~al.(1995)Auer, Cesa-Bianchi, Freund, and Schapire}]{ACBFS95}
Auer, Peter, Nicol{\`o} Cesa-Bianchi, Yoav Freund, Robert~E. Schapire. 1995.
\newblock Gambling in a rigged casino: The adversarial multi-armed bandit
  problem.
\newblock {\it Proceedings of the 36th Annual Symposium on Foundations of
  Computer Science\/}.

\bibitem[{Bauschke and Combettes(2017)}]{BC17}
Bauschke, Heinz~H., Patrick~L. Combettes. 2017.
\newblock {\it Convex Analysis and Monotone Operator Theory in Hilbert
  Spaces\/}.
\newblock 2nd ed. Springer, New York, NY, USA.

\bibitem[{Bena{\"\i}m(1999)}]{Ben99}
Bena{\"\i}m, Michel. 1999.
\newblock Dynamics of stochastic approximation algorithms.
\newblock Jacques Az{\'e}ma, Michel {\'E}mery, Michel Ledoux, Marc Yor, eds.,
  {\it S{\'e}minaire de Probabilit{\'e}s XXXIII\/}, {\it Lecture Notes in
  Mathematics\/}, vol. 1709. Springer Berlin Heidelberg, 1--68.

\bibitem[{Bervoets et~al.(2018)Bervoets, Bravo, and Faure}]{BBF18}
Bervoets, Sebastian, Mario Bravo, Mathieu Faure. 2018.
\newblock Learning with minimal information in continuous games.
\newblock \url{https://arxiv.org/abs/1806.11506}.

\bibitem[{Chen and Teboulle(1993)}]{CT93}
Chen, Gong, Marc Teboulle. 1993.
\newblock Convergence analysis of a proximal-like minimization algorithm using
  {Bregman} functions.
\newblock {\it SIAM Journal on Optimization\/} {\bf 3}(3) 538--543.

\bibitem[{Chung(1954)}]{Chu54}
Chung, Kuo-Liang. 1954.
\newblock On a stochastic approximation method.
\newblock {\it The Annals of Mathematical Statistics\/} {\bf 25}(3) 463--483.

\bibitem[{Cohen et~al.(2017)Cohen, H{\'e}liou, and Mertikopoulos}]{CHM17-NIPS}
Cohen, Johanne, Am{\'e}lie H{\'e}liou, Panayotis Mertikopoulos. 2017.
\newblock Learning with bandit feedback in potential games.
\newblock {\it NIPS '17: Proceedings of the 31st International Conference on
  Neural Information Processing Systems\/}.

\bibitem[{Combettes(2001)}]{Com01}
Combettes, Patrick~L. 2001.
\newblock Quasi-{Fej{\'e}rian} analysis of some optimization algorithms.
\newblock Dan Butnariu, Yair Censor, Simeon Reich, eds., {\it Inherently
  Parallel Algorithms in Feasibility and Optimization and Their
  Applications\/}. Elsevier, New York, NY, USA, 115--152.

\bibitem[{Combettes and Pesquet(2015)}]{CP15}
Combettes, Patrick~L., Jean-Christophe Pesquet. 2015.
\newblock Stochastic quasi-{Fej{\'e}r} block-coordinate fixed point iterations
  with random sweeping.
\newblock {\it SIAM Journal on Optimization\/} {\bf 25}(2) 1221--1248.

\bibitem[{Debreu(1952)}]{Deb52}
Debreu, Gerard. 1952.
\newblock A social equilibrium existence theorem.
\newblock {\it Proceedings of the National Academy of Sciences of the USA\/}
  {\bf 38}(10) 886--893.

\bibitem[{Flaxman et~al.(2005)Flaxman, Kalai, and McMahan}]{FKM05}
Flaxman, Abraham~D., Adam~Tauman Kalai, H.~Brendan McMahan. 2005.
\newblock Online convex optimization in the bandit setting: gradient descent
  without a gradient.
\newblock {\it SODA '05: Proceedings of the 16th annual ACM-SIAM Symposium on
  Discrete Algorithms\/}. 385--394.

\bibitem[{Foster et~al.(2016)Foster, Lykouris, Sridharan, and Tardos}]{FLST16}
Foster, Dylan~J., Thodoris Lykouris, Kathrik Sridharan, {\'E}va Tardos. 2016.
\newblock Learning in games: Robustness of fast convergence.
\newblock {\it NIPS '16: Proceedings of the 30th International Conference on
  Neural Information Processing Systems\/}. 4727--4735.

\bibitem[{Freund and Schapire(1999)}]{FS99}
Freund, Yoav, Robert~E. Schapire. 1999.
\newblock Adaptive game playing using multiplicative weights.
\newblock {\it Games and Economic Behavior\/} {\bf 29} 79--103.

\bibitem[{Ghadimi and Lan(2013)}]{GL13}
Ghadimi, Saeed, Guanghui Lan. 2013.
\newblock Stochastic first- and zeroth-order methods for nonconvex stochastic
  programming.
\newblock {\it SIAM Journal on Optimization\/} {\bf 23}(4) 2341--2368.

\bibitem[{Goodman(1980)}]{Goo80}
Goodman, John~C. 1980.
\newblock Note on existence and uniqueness of equilibrium points for concave
  {$N$-person} games.
\newblock {\it Econometrica\/} {\bf 48}(1) 251.

\bibitem[{Hall and Heyde(1980)}]{HH80}
Hall, P., C.~C. Heyde. 1980.
\newblock {\it Martingale Limit Theory and Its Application\/}.
\newblock Probability and Mathematical Statistics, Academic Press, New York.

\bibitem[{Kleinberg(2004)}]{Kle04}
Kleinberg, Robert~D. 2004.
\newblock Nearly tight bounds for the continuum-armed bandit problem.
\newblock {\it NIPS' 04: Proceedings of the 18th Annual Conference on Neural
  Information Processing Systems\/}.

\bibitem[{Lee(2003)}]{Lee03}
Lee, John~M. 2003.
\newblock {\it Introduction to Smooth Manifolds\/}.
\newblock No. 218 in Graduate Texts in Mathematics, Springer-Verlag, New York,
  NY.

\bibitem[{Mertikopoulos et~al.(2017)Mertikopoulos, Belmega, Negrel, and
  Sanguinetti}]{MBNS17}
Mertikopoulos, Panayotis, E.~Veronica Belmega, Romain Negrel, Luca Sanguinetti.
  2017.
\newblock Distributed stochastic optimization via matrix exponential learning.
\newblock {\it {IEEE} Trans. Signal Process.\/} {\bf 65}(9) 2277--2290.

\bibitem[{Mertikopoulos et~al.(2018{\natexlab{a}})Mertikopoulos, Lecouat,
  Zenati, Foo, Chandrasekhar, and Piliouras}]{MLZF+18}
Mertikopoulos, Panayotis, Bruno Lecouat, Houssam Zenati, Chuan-Sheng Foo, Vijay
  Chandrasekhar, Georgios Piliouras. 2018{\natexlab{a}}.
\newblock Optimistic mirror descent in saddle-point problems: {Going} the extra
  (gradient) mile.
\newblock \url{https://arxiv.org/abs/1807.02629}.

\bibitem[{Mertikopoulos et~al.(2018{\natexlab{b}})Mertikopoulos, Papadimitriou,
  and Piliouras}]{MPP18}
Mertikopoulos, Panayotis, Christos~H. Papadimitriou, Georgios Piliouras.
  2018{\natexlab{b}}.
\newblock Cycles in adversarial regularized learning.
\newblock {\it SODA '18: Proceedings of the 29th annual ACM-SIAM Symposium on
  Discrete Algorithms\/}.

\bibitem[{Mertikopoulos and Zhou(2018)}]{MZ18}
Mertikopoulos, Panayotis, Zhengyuan Zhou. 2018.
\newblock Learning in games with continuous action sets and unknown payoff
  functions.
\newblock {\it Mathematical Programming\/} .

\bibitem[{Nemirovski et~al.(2009)Nemirovski, Juditsky, Lan, and
  Shapiro}]{NJLS09}
Nemirovski, Arkadi~Semen, Anatoli Juditsky, Guangui~(George) Lan, Alexander
  Shapiro. 2009.
\newblock Robust stochastic approximation approach to stochastic programming.
\newblock {\it SIAM Journal on Optimization\/} {\bf 19}(4) 1574--1609.

\bibitem[{Nemirovski and Yudin(1983)}]{NY83}
Nemirovski, Arkadi~Semen, David~Berkovich Yudin. 1983.
\newblock {\it Problem Complexity and Method Efficiency in Optimization\/}.
\newblock Wiley, New York, NY.

\bibitem[{Nesterov(2004)}]{Nes04}
Nesterov, Yurii. 2004.
\newblock {\it Introductory Lectures on Convex Optimization: A Basic Course\/}.
\newblock No.~87 in Applied Optimization, Kluwer Academic Publishers.

\bibitem[{Nesterov(2009)}]{Nes09}
Nesterov, Yurii. 2009.
\newblock Primal-dual subgradient methods for convex problems.
\newblock {\it Mathematical Programming\/} {\bf 120}(1) 221--259.

\bibitem[{Orda et~al.(1993)Orda, Rom, and Shimkin}]{ORShi93}
Orda, Ariel, Raphael Rom, Nahum Shimkin. 1993.
\newblock Competitive routing in multi-user communication networks.
\newblock {\it {IEEE/ACM} Trans. Netw.\/} {\bf 1}(5) 614--627.

\bibitem[{Palaiopanos et~al.(2017)Palaiopanos, Panageas, and Piliouras}]{PPP17}
Palaiopanos, Gerasimos, Ioannis Panageas, Georgios Piliouras. 2017.
\newblock Multiplicative weights update with constant step-size in congestion
  games: Convergence, limit cycles and chaos.
\newblock {\it NIPS '17: Proceedings of the 31st International Conference on
  Neural Information Processing Systems\/}.

\bibitem[{Perkins and Leslie(2014)}]{PL14}
Perkins, Steven, David~S. Leslie. 2014.
\newblock Stochastic fictitious play with continuous action sets.
\newblock {\it Journal of Economic Theory\/} {\bf 152} 179--213.

\bibitem[{Perkins et~al.(2017)Perkins, Mertikopoulos, and Leslie}]{PML17}
Perkins, Steven, Panayotis Mertikopoulos, David~S. Leslie. 2017.
\newblock Mixed-strategy learning with continuous action sets.
\newblock {\it {IEEE} Trans. Autom. Control\/} {\bf 62}(1) 379--384.

\bibitem[{Rockafellar(1970)}]{Roc70}
Rockafellar, Ralph~Tyrrell. 1970.
\newblock {\it Convex Analysis\/}.
\newblock Princeton University Press, Princeton, NJ.

\bibitem[{Rosen(1965)}]{Ros65}
Rosen, J.~B. 1965.
\newblock Existence and uniqueness of equilibrium points for concave
  ${N}$-person games.
\newblock {\it Econometrica\/} {\bf 33}(3) 520--534.

\bibitem[{Shamir(2013)}]{Sha13}
Shamir, Ohad. 2013.
\newblock On the complexity of bandit and derivative-free stochastic convex
  optimization.
\newblock {\it COLT '13: Proceedings of the 26th Annual Conference on Learning
  Theory\/}.

\bibitem[{Sorin and Wan(2016)}]{SW16}
Sorin, Sylvain, Cheng Wan. 2016.
\newblock Finite composite games: Equilibria and dynamics.
\newblock {\it Journal of Dynamics and Games\/} {\bf 3}(1) 101--120.

\bibitem[{Spall(1997)}]{Spa97}
Spall, James~C. 1997.
\newblock A one-measurement form of simultaneous perturbation stochastic
  approximation.
\newblock {\it Automatica\/} {\bf 33}(1) 109--112.

\bibitem[{Syrgkanis et~al.(2015)Syrgkanis, Agarwal, Luo, and Schapire}]{SALS15}
Syrgkanis, Vasilis, Alekh Agarwal, Haipeng Luo, Robert~E. Schapire. 2015.
\newblock Fast convergence of regularized learning in games.
\newblock {\it NIPS '15: Proceedings of the 29th International Conference on
  Neural Information Processing Systems\/}. 2989--2997.

\bibitem[{Viossat and Zapechelnyuk(2013)}]{VZ13}
Viossat, Yannick, Andriy Zapechelnyuk. 2013.
\newblock No-regret dynamics and fictitious play.
\newblock {\it Journal of Economic Theory\/} {\bf 148}(2) 825--842.

\bibitem[{Zinkevich(2003)}]{Zin03}
Zinkevich, Martin. 2003.
\newblock Online convex programming and generalized infinitesimal gradient
  ascent.
\newblock {\it ICML '03: Proceedings of the 20th International Conference on
  Machine Learning\/}. 928--936.

\end{thebibliography}

\end{document}